\newtheorem{theorem}{Theorem}
\newtheorem{lemma}{Lemma}
\begin{document}

\theoremstyle{definition}
\newtheorem{myprop}{Property}
\newtheorem{mydef}{Definition}

\title{Multiplex Influence Maximization in Online Social Networks with Heterogeneous Diffusion Models}
\author{}
\author{Alan Kuhnle, Md Abdul Alim, Xiang Li, Huiling Zhang, and My T. Thai\\
Department of Computer \& Information Science \& Engineering\\
University of Florida\\
Gainesville, Florida, USA\\
Email: \{kuhnle,abdul.alim,lixiang,huilingzhang,mythai\}@ufl.edu }
\maketitle
\begin{abstract}\label{label:abstract}
Motivated by online social networks that are linked together
through overlapping users,
we study the influence maximization problem on a multiplex,
with each layer endowed with its own model of influence diffusion.
This problem is a novel version of the influence maximization problem 
that necessitates new analysis incorporating the type of
propagation on each layer of the multiplex.
We identify a new property, generalized deterministic submodular, 
which when satisfied by the propagation
in each layer, ensures that the propagation on the multiplex overall
is submodular -- for this case,
we formulate ISF, the greedy algorithm with
approximation ratio $(1-1/e)$.
Since the size of a multiplex comprising multiple OSNs may 
encompass billions of users, we
formulate an algorithm KSN that runs on each layer
of the multiplex in parallel.
KSN takes an $\alpha$-approximation
algorithm $A$
for the influence maximization problem on a single
network as input, and has approximation ratio
$\frac{(1 - \epsilon) \alpha}{(o + 1)k}$ for
arbitrary $\epsilon > 0$, $o$ is the number of
overlapping users, and $k$ is the number of layers
in the multiplex.
Experiments on real and synthesized multiplexes 
validate the efficacy of the proposed algorithms for
the problem of
influence maximization in the heterogeneous multiplex.
Implementations of ISF and KSN are available at
\url{http://www.alankuhnle.com/papers/mim/mim.html}.
\end{abstract}

\section{Introduction}\label{label:intro}
The rapid growth of large Online Social Networks (OSN) such as 
Facebook, Google+, and Twitter has enabled them to become thriving places for 
viral marketing in recent years. 
People are increasingly engaged in OSNs:
62\% of adults worldwide use social media and spend 22\% of 
online time on social networks on 
average \cite{osnstatistics}. Much like real-world social networks, 
information spreading in OSNs 
has viral properties, creating an excellent medium for marketing. 
Due to the impact of this effect on the popularity of new products, 
OSNs have rapidly become an attractive channel for raising awareness 
of new products or brands. In this context, an important 
problem is how to find the best set of seed users who can influence 
the most other users.

Increasingly, users engage in more than one OSN; 
they connect their accounts across multiple networks, 
such that posts in one network 
are simultaneously posted in other networks.
In Fig. \ref{fig:crosspost}, we show 
the process of connecting a Facebook and Twitter account, 
which allows automatically posting on Facebook when a new tweet is sent, and vice versa. As a consequence, 
the propagation of information can
cross from one OSN to another through these
overlapping users.

The influence propagation in each OSN will be particular to that network; 
for example, usage patterns for Twitter and Facebook are quite different.  
Moreover, even different
cascades in the same social network may be
better explained by different models 
of influence propagation \cite{saito}.
Thus, overlapping users connect
OSNs together into a multiplex structure of OSNs, comprising
multiple OSNs linked together through overlapping users, where each OSN has
different local propagation.
In this work, we study the multiplex
influence maximization (MIM) problem, to pick
the most influential seed nodes, 
on a multiplex of OSNs as described above.
Several natural questions arise: 
(1) what conditions on the propagation in each layer OSN are sufficient for the overall multiplex 
propagation to have the submodularity property, which
is important for approximation algorithms?
(2) Can existing methods for single OSNs be utilized within 
a solution to the
MIM problem? (3) What role do overlapping users play in the influence propagation on a multiplex of OSNs?
From a computational perspective, a multiplex consisting of multiple
OSN layers may be very large, comprising billions of users in each
layer. 

\begin{figure}%
\subfigure[Auto-post from Twitter to Facebook]{
		\centering
		\includegraphics[width=0.45\columnwidth]{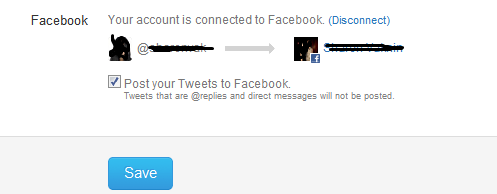}
		\label{fig:twitter2facebook}
	}	
	\subfigure[Auto-post from Facebook to Twitter] { 
		\centering
		\includegraphics[width=0.45\columnwidth]{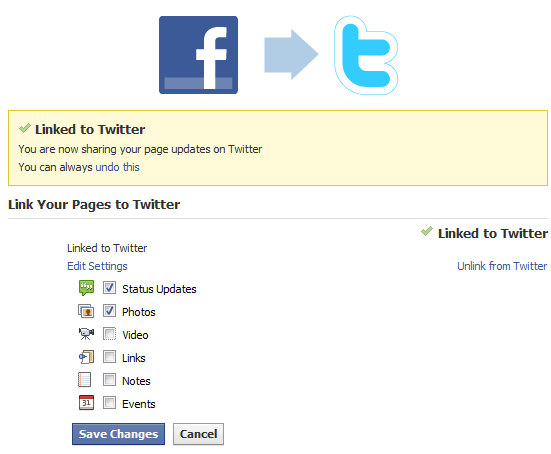}
		\label{fig:facebook2twitter}
	}	
\caption{The process of becoming an overlapping user
  of Twitter and Facebook.}%
\label{fig:crosspost}%
\end{figure}
\begin{figure}
  \centering
  \begin{tikzpicture}
    \begin{scope}[every node/.style={circle,thick,draw}]
      \node (0) at (0,0) {$a$};
      \node (1) at (2,0) {$b$};
      \node (2) at (2,2) {$c$};
      \node (3) at (0,-2) {$a$};
      \node (4) at (2, -2) {$b$};
\end{scope}

\begin{scope}[every node/.style={fill=white,circle},
  every edge/.style={draw=black,very thick}]
  \path [->] (0) edge node { $0.5$ } (2);
  \path [->] (1) edge node { $0.5$ } (2);
  \path [->] (3) edge node { $0.5$ } (4);
  \node (5) at (4,1) {Layer 1};
  \node (5) at (4,-2) {Layer 2};
\end{scope}

\begin{scope}[every node/.style={fill=white,circle},
  every edge/.style={draw=black,dashed}]
  \path [-] (0) edge (3);
  \path [-] (1) edge (4);
\end{scope}
\end{tikzpicture}
  \caption{A 2-layer multiplex exemplifying how propagation
    in a multiplex will deviate from the propagation in 
    a single layer.} \label{fig:example}
\end{figure}
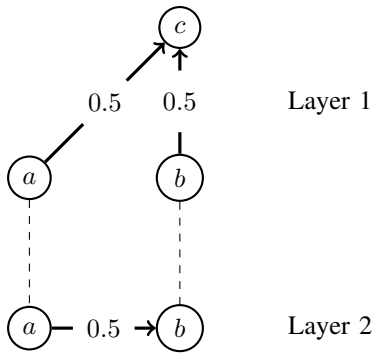 
To demonstrate how propagation in a multiplex differs
from propagation in a single layer, 
consider the toy multiplex shown in Fig. \ref{fig:example},
where $a,b$ are overlapping users in both layers, and
$c$ is only present in Layer 1.
Let Layer 1 have a fixed threshold model, with the
threshold $\theta_c$ of $c$ equal to 1. Thus, $c$ 
becomes activated iff $a,b$ are both activated. Let Layer 2
have the Independent Cascade (IC) model. 
Then seeding vertex $a$ will result
in both $b,c$ having a chance of becoming activated,
as $a$ may activate $b$ according to the IC model in
Layer 2. If $b$ becomes activated, then $a,b$ will together
activate $c$ in Layer 1. Finally, observe that the activation
of the two layers
cannot be incorporated into a single layer network following
either the IC model or the fixed threshold model.

For influence maximization problem on a single layer network with
propagation according to
a single model, such as Independent Cascade (IC),
approximation algorithms have been developed and optimized
\cite{Kempe2005, Chen2010, cohen2014sketch, Borgs2014max, Tang2014}.
However, since these algorithms only consider a single model of
influence propogation,
they are not directly
suitable for MIM, where each OSN has a different model of
propagation. As shown in Fig. \ref{fig:overlappingsizetable}, 
the fraction of overlapping users is considerable.

\begin{figure}%
\centering
\begin{tabular}{|c|c|c|c|c|} \hline
  - & LinkedIn & Facebook & Twitter & MySpace \\ \hline
  LinkedIn & - & $12 \%$ & $21 \%$ & $6 \%$   \\ \hline
  Facebook & $82 \%$ & - & $91\%$ & $57 \%$   \\ \hline
  Twitter & $31 \%$ & $20\%$ & - & $17 \%$   \\ \hline
  MySpace & $36 \%$ & $49\%$ & $70\%$ & -   \\ \hline
\end{tabular}
\caption{The percentage of overlapping users between major OSNs in 2009 \cite{Anderson2009}. The table is read as
follows: $(x,y)\%$ of users in OSN $y$ also use OSN $x$,
where $x$ is the row, $y$ is the column}%
\label{fig:overlappingsizetable}%
\vspace{-0.15in}
\end{figure}

Our main contributions are summarized as follows.
\begin{itemize}
	\item We define the
          generalized deterministic submodular (GDS) property,
          which if satisfied by each layer implies overall propagation 
          on the multiplex is submodular.  Submodularity allows 
          the formulation of a greedy $(1 - 1/e)$-approximation algorithm (ISF) for MIM.
	\item We provide an
          approximation algorithm
          Knapsack Seeding of Networks (KSN) which is 
          parallelizable by layer of the
          multiplex and utilizes
          an algorithm $\mathcal A$ for the single-layer influence maximization problem
          together with a knapsack-based approach
          to create a solution for MIM, thereby 
          taking advantage of previous optimizations \cite{Chen2010, cohen2014sketch, Borgs2014max, Tang2014} 
          for the homogeneous,
          single layer case.
          If the utilized algorithm $\mathcal A$ has approximation
          factor $\alpha$, then KSN has approximation ratio
          \[ \frac{(1 - \epsilon) \alpha}{(o + 1)k}, \]
          where $o$ is the number of overlapping users in 
          the multiplex,
          $k$ is the number of layers in the multiplex,
          and $\epsilon > 0$ is arbitrary.
         \item
          Experimental evaluation of all algorithms on
          a variety of multiplexes, both synthesized and from traces
          of real multiplexes of OSNs, 
          validates the effectiveness of the two approximation
          algorithms.

\end{itemize}

Overlapping users can be identified in real networks using,
for example, methods in \cite{iofciu2011identifying, buccafurri2012discovering}; 
methods of identification is not a focus of this paper.

The rest of the paper is organized as follows. In Section \ref{label:model}, 
we provide technical definition for model of influence
propagation.
Also, we present the influence propagation model in multiplex networks in terms of its component layers 
and define the problem. 
We then outline our proposed algorithms for solving the MIM problem 
along with the inapproximability proof for a class of models in Section \ref{label:solution}. 
Section \ref{label:experiment} shows the experimental results on the performance of different algorithms. 
In Section \ref{label:relatedwork} we discuss related work on
influence propagation and finally, Section \ref{label:conclusion} concludes the paper.  

\section{Model Representation and Problem Formulation}\label{label:model}
\subsection{Influence Propagation Models}
Intuitively, the idea of a model
of influence propagation in a network is clear; it is
a way by which nodes can be activated or influenced
given a set of seed nodes.  Kempe et al. studied
a variety of models in their seminal work on influence
progation on a graph, including
the Independent Cascade (IC) and 
Linear Threshold (LT) models \cite{Kempe2003}.
For completeness, we briefly describe these two models. 
An instance 
of influence propagation on a graph $G$ follows the IC model if a weight can be assigned
to each edge such that the propagation probabilities can be computed as
follows: once a node $u$ first becomes active, 
it is given a single chance to activate each currently inactive neighbor $v$ with probability proportional to 
the weight of the edge $(u,v)$. In the LT model each network user $u$ has an associated threshold $\theta(u)$ chosen uniformly from $[0,1]$ which determines how much influence (the sum of the weights of incoming edges) is required to activate $u$. $u$ becomes active if the total influence from its direct neighbors exceeds the threshold $\theta(u)$.

In this work, since we allow each layer of a multiplex
to have a different
model of influence propagation, we need
a technical definition for this concept.

\begin{mydef}[Model of influence propagation]
  A model of influence propagation $\sigma$ on 
  a graph $G = (V, E)$ is a function $P$ that assigns,
  for each $A \subset V$, and for each $S \subset V$,
  a probability 
  \[ P( S | A ) = P( S \text{ is final activated set } | A \text{ is seed set } ) \in [0, 1], \]
  satisfying
  \begin{align*}
    &(1) \text{ if } B \cap A \subsetneq A, 
    P(B | A) = 0, \\
    &(2) \sum_{S : S \subset V} P( S | A ) = 1.
  \end{align*}
  (1) simply states that seed nodes may not
      become unactivated, and (2) ensures that
      we have a probability distribution.

  The expected number of activated nodes given a seed set $A$
  is denoted $\sigma(A)$, and
  \[ \sigma (A) = \sum_{S : S \subset V} P( S | A ) \cdot |S|. \]
  A model $\sigma$ is called \emph{deterministic} iff
  for each $A \subset V$, there exists $F_A$ such that
  $P( F_A | A ) = 1$; intuitively, deterministic means
  that there is no probability in the model of diffusion,
  since the final set activated is uniquely determined
  by the seed set.  If $\sigma$ is deterministic,
  $\sigma(A) = |F_A|$; we abuse notation and also
  use $\sigma(A) = F_A$, the final set itself.  This allows
  convenient specification of the set $T = \sigma( \tau ( A ) )$, 
  for example, where both $\sigma, \tau$ are deterministic models,
  and $T$ is the final activated set generated by using the final
  set of $A$ under $\tau$ as the seed set for $\sigma$.
\end{mydef}



Many models of information propagation
discussed in the literature satisfy the
submodularity property, that $\sigma$ satisfies
\[\sigma (A) + \sigma (B) \ge \sigma (A \cup B) + \sigma( A \cap B ), \]  
for all $A, B \subset V$. Submodularity is important
since it guarantees that a greedy approach to the influence
maximization prolem will have an approximation ratio 
\cite{nemhauser78}. We now define a property
that is stronger than submodularity.

\begin{mydef}[Generalized Deterministic Submodular]
  Let $\sigma$ be a model of influence propagation.
  $\sigma$ satisfies the 
  \emph{generalized deterministic submodular property} 
  (GDS) if
  the expected number of activations, 
  given seed set $A$, can be written
  \[ \sigma (A) = \sum_{j = 1}^s p_j \sigma_{j} (A), \]
  where each $\sigma_{j}$, $j \in \{1, \ldots, s \}$ 
  is a deterministic,
  submodular model of influence propagation, and 
  $p_j \in [0, 1 ]$, $\sum_{j = 1}^s p_j = 1$.
\label{property1}
\end{mydef}

\begin{lemma} Let $\sigma$ be a model of
  influence propagation.  If $\sigma$ satisfies
  GDS, then $\sigma$ is submodular.
  \label{lemma01}
\end{lemma}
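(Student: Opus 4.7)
The plan is to exploit the fact that submodularity is preserved under taking non-negative linear combinations, so that the GDS decomposition $\sigma(A) = \sum_{j=1}^s p_j \sigma_j(A)$ immediately inherits submodularity from its deterministic pieces $\sigma_j$.

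Concretely, I would first unpack the definition: each $\sigma_j$ is a deterministic submodular model of influence propagation, so in particular for all $A,B \subset V$ we have
\[ \sigma_j(A) + \sigma_j(B) \;\geq\; \sigma_j(A \cup B) + \sigma_j(A \cap B). \]
Multiply this inequality by $p_j \geq 0$ (permissible since $p_j \in [0,1]$) and sum over $j = 1, \ldots, s$. Using the GDS expansion of $\sigma$ on each term yields
\[ \sigma(A) + \sigma(B) \;\geq\; \sigma(A \cup B) + \sigma(A \cap B), \]
which is precisely the submodularity condition stated just before the lemma.

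There is essentially no obstacle here: the content of the lemma is that the cone of submodular set functions is closed under convex combinations, and the GDS property is defined so as to put $\sigma$ explicitly inside that cone. The only thing worth flagging for the reader is that we do not need the $\sigma_j$ themselves to be probability-weighted expected-activation functions in any structural sense; all we use is the real-valued inequality they satisfy pointwise on pairs $(A,B)$, together with $p_j \geq 0$. Accordingly, the whole argument can be presented in two or three lines, and no appeal to the probabilistic interpretation of $P(S\mid A)$ is required.
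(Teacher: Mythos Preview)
Your proposal is correct and follows essentially the same approach as the paper: both argue that $\sigma$ is a nonnegative linear combination of submodular functions and therefore submodular. The paper states this in a single sentence, while you spell out the inequality and the multiply-and-sum step explicitly, but the content is identical.
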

\begin{proof}
  Let $A$ be an arbitrary seed set.
  Since $\sigma$ satisfies
  GDS, 
  $\sigma(A) = \sum_{i = 1}^s p_j \sigma_j(A)$, where
  $\sigma_j(A)$ is expected activation of 
  deterministic and submodular model $\sigma_j$.
  Hence the expected activation function $\sigma$ 
  is a nonnegative linear combination of submodular
  functions, thus $\sigma$ is submodular.
\end{proof}

Examples of models in the literature that satisfy
submodularity include IC, LT, Asynchronous
Independent Cascade, Asynchronous Linear Threshold 
\cite{saito2010selecting},
Independent Cascade Model 
for Endogenous Competition,    
Homogeneous Competitive Independent Cascade model,
and K-LT competitive diffusion model, 
as well as others \cite{Kempe2003} \cite{chenwei13}.
In all cases,
the submodularity of these models has been shown
by considering instances where some edges
are live and some are blocked -- each such
instance corresponds to a deterministic
submodular model.  Thus, all of these proofs
show submodularity by showing the stronger property GDS
and using Lemma \ref{lemma01}. Another example of 
a model that satisfies GDS is the \textit{conformity and
context-aware cascade model} \cite{li2015conformity}.

In section \ref{label:solution}, we show (Theorem \ref{theorem1}) that influence propagation on a multiplex satisfies
GDS if the propagation on each layer network
satisfies GDS; hence if the propagation on each layer satisfies GDS, 
the propagation in the multiplex is submodular.
\subsection{Notations and Multiplex Model}
A social network can be modeled as a directed
graph $G = (V,E)$.
The vertex set $V$ 
represents the participation of users in the social network,
and the edge set
represents the connections among network users. 
These connections model friendships or 
relationships. 
\begin{mydef}[Heterogeneous multiplex]
A multiplex of OSNs is a list 
$\mathscr{G} = \{ (G_i, \sigma_i) : i \in \{1, \ldots, k \} \}$ with
$G_i = (V_i, E_i)$ a directed graph representing an OSN
and influence model
$\sigma_i$, representing the model of influence propagation
in $G_i$.
If a user belongs to
more than one OSN, an interlayer
edge is added between the pair of
nodes, one in each OSN, representing
this user. Such a user is termed
\emph{overlapping user}; we will denote the set of overlapping
users by $O$. We will denote the set of all users in the multiplex by  $V = \bigcup_{i = 1}^k V_i$. 

The influence propagation model $\sigma$ on the
multiplex is defined in the following way.
If an overlapping vertex $v$ is activated in one graph $G_i$, 
then, deterministically its adjacent interlayer copies
become activated in all OSNs;
propagation occurs in each graph 
$G_i$ according
to its propagation model $\sigma_i$.  
Figure \ref{fig:tf} 
shows an example of the definition of $\sigma$,
in a multiplex $\mathscr{G} = \{ (G_1, \sigma_1 ), (G_2, \sigma_2) \}$ with two layers.  Here $\sigma_1$ and $\sigma_2$ are simply deterministic models of activation following
the directed edges in the layers.
Initially, the activated set is $\{ v_1, v_6 \}$
in Fig. \ref{fig:tf}(a), shown by red nodes. In
\ref{fig:tf}(b), the activation has propagated according
to $\sigma_1, \sigma_2$ in $G_1, G_2$, respectively, 
activating in addition set $\{ v_2, v_3, v_8 \}$.
Next, in \ref{fig:tf}(c), the propagation proceeds
between the two layers via overlapping nodes, whereupon
it may continue in each layer $G_i$ according to $\sigma_i$.  Propagation ceases when no new nodes are activated
in any layer. In addition, Fig. \ref{fig:tf} demonstrates
how without loss of generality, we may consider all
nodes to be overlapping by adding absent nodes 
as isolated nodes (the white nodes).  In section
\ref{isf:submod}, we make use of this fact by
considering all layers to have the same nodes --
in the rest of the paper, we consider overlapping
users to be non-trivial; i.e. there are no isolated
nodes in any layer. We refer to the users that 
actively participate
in multiple networks (i.e. are non-isolated) 
as overlapping users, 
which may be identified in real networks using, for example,
methods in 
\cite{iofciu2011identifying, buccafurri2012discovering}. 

The expected number of activations in the multiplex
given seed set $A \subset V$ is denoted 
$\sigma (A)$, in addition to denoting the model
defined above as $\sigma$; we do not count more than one
copy of overlapping nodes towards $\sigma(A)$. 
Each graph $G_i$ is referred to
as a \emph{layer network} of the multiplex $G$. We refer to a graph $G = (V,E)$
that is not part of a multiplex as a single network, to contrast with multiplex network, and we refer to 
propagation occuring in a single network according to
a single influence propagation model as \emph{homogeneous}
 propagation, to contrast with the \emph{heterogeneous}
propagation in a multiplex with more than one layer and
propagation model in each layer.
\end{mydef}

\subsection{Problem Definition}
We now consider the problem of maximizing the influence of a seed set of given size in 
a multiplex network. Formally,

\begin{mydef}[Heterogeneous Multiplex Influence Maximization (MIM)]
Given a multiplex network $\mathscr{G} = (G_1, \sigma_1), \ldots, ( G_k, \sigma_k )$ with 
$k$ layer networks, influence
propagation model $\sigma$ on $\mathscr{G}$, as defined above in terms of $\sigma_i$,
and positive integer $l$, find a set $S \subset V$ of size at most $l$ 
so as to maximize the expected number of active users $\sigma (S)$.
An instance of this problem will be denoted $(\mathscr{G}, k, l, \sigma )$.
\end{mydef}

\section{Approximations of MIM}\label{label:solution}
Since influence maximization on a single network is a special case
of influence maximization on a multiplex, MIM is $NP$-complete.
In this section, we first prove that the propagation $\sigma$ on a multiplex is submodular
if the propagation on each layer satisfies GDS and formulate a greedy algorithm
to maximize expected influence.  If each layer satisfies GDS, $\sigma$ is
submodular and thus our greedy algorithm has approximation ratio $1 - 1/e$. 
Finally, we consider approaches to approximate MIM by
approximating influence maximization on each layer separately and 
effectively combining the result into a feasible solution for MIM,
which leads to a scalable approximation algorithm KSN with ratio depending on
number of overlapping users $o$, number of layers $k$ in the multiplex,
the ratio $\alpha$ for the approximation used on the homogeneous layers,
and an arbitrary $\epsilon > 0$; the ratio of KSN is 
$\frac{(1 - \epsilon) \alpha}{(o + 1)k}$.

\subsection{Greedy approach}
Let $\mathscr{G} = (G_i, \sigma_i)_{i = 1}^k$ be a multiplex with propagation model $\sigma$.  
We prove that $\sigma$ is submodular for the case that
each $\sigma_i$ satisfies generalized deterministic submodularity (GDS). 
Thus, the greedy algorithm, which we detail in this section, achieves a $(1-1/e)$ ratio
when propagation in each layer $\sigma_i$ satisfies GDS.

\subsubsection{Submodularity}
\label{isf:submod}

Without loss of generality, we may consider that 
the sets $V_i$ are the same;
that is, $V_i = V$ for all $i$ and some set $V$: 
if a vertex $v \in G_i$ does not exist in some 
$G_j$, simply add it 
to $G_j$ as an isolated vertex.  Thus, in this section
only, we consider $V_i = V$ 
for all $i$. Recall that instead of counting 
the activation of all $k$
copies of node $u \in V$, we count only a single copy as activated.
The expected number of activations in the multiplex
given seed set $A \subset V$ is denoted $\sigma (A)$; again,
we do not count more than one copy of $u \in V$ towards $\sigma(A)$. 

We will first consider a simpler case: when the propagation of each $G_i$ is deterministic and submodular.
\paragraph{Deterministic case}
In this section, let the propagation $\sigma_i$ of each $G_i$ 
be deterministic and submodular.  Recall the definition
of the multiplex influence propagation $\sigma$.
Given a seed set $S \subset V$, the set of nodes in the
multiplex that are
activated after the propagation finishes will be denoted by $\tau(S)$; the nodes activated if propagation is restricted to only $G_i$ will be denoted $\tau_i (S)$.  
Notice that $|\tau(S)| = \sigma(S)$ and $|\tau_i(S)| = \sigma_i(S)$.
\begin{figure}
  \centering 
  \subfigure[] {
    \includegraphics[width = 0.2\textwidth ]{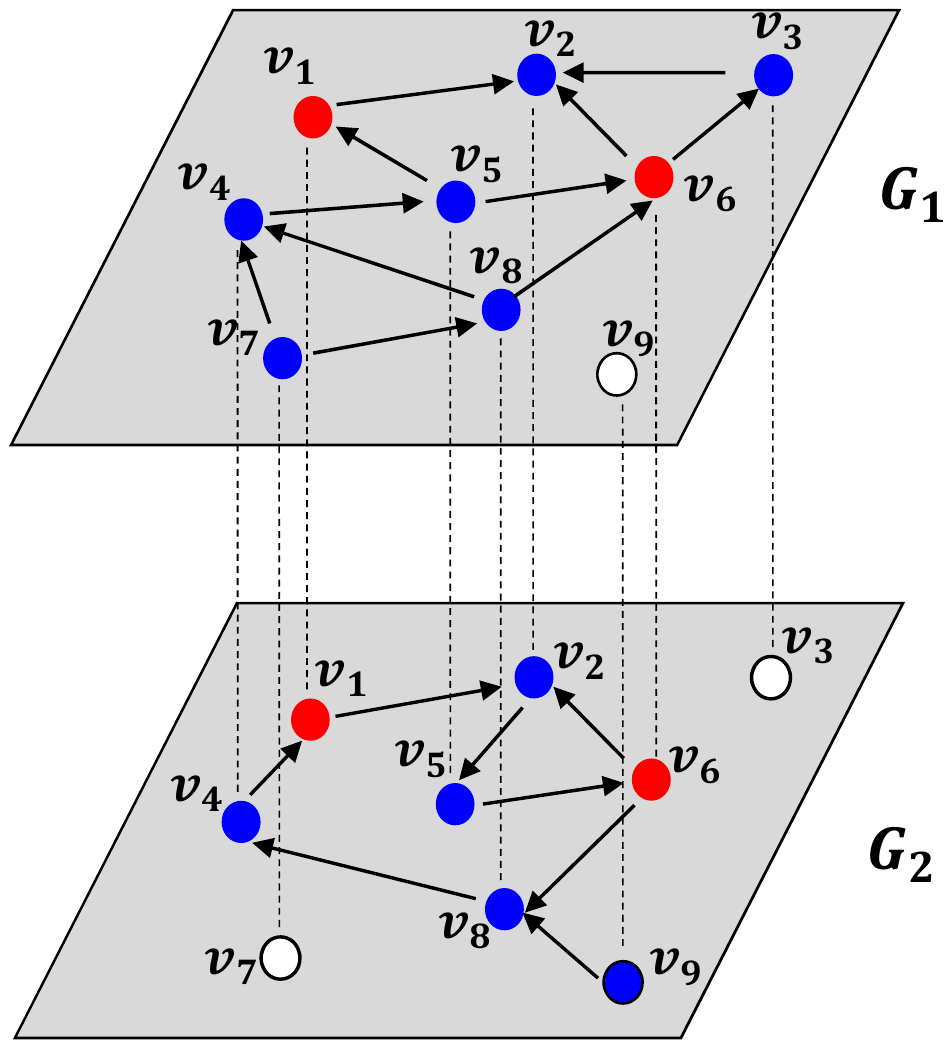}
		\label{fig:mult_frame}
	}
  \subfigure[] {
    \includegraphics[width = 0.2\textwidth ]{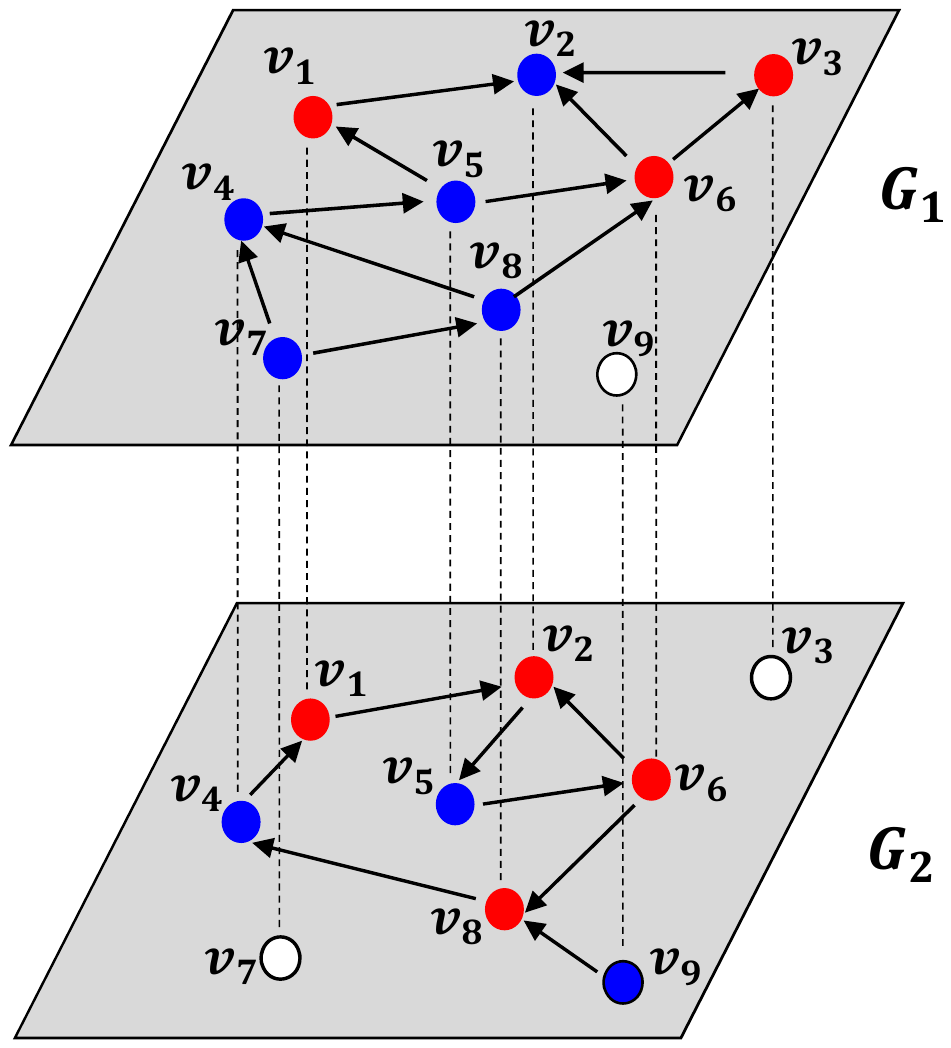} 
    \label{fig:mult_fig2}
  }

  \subfigure[] {
    \includegraphics[width = 0.2\textwidth ]{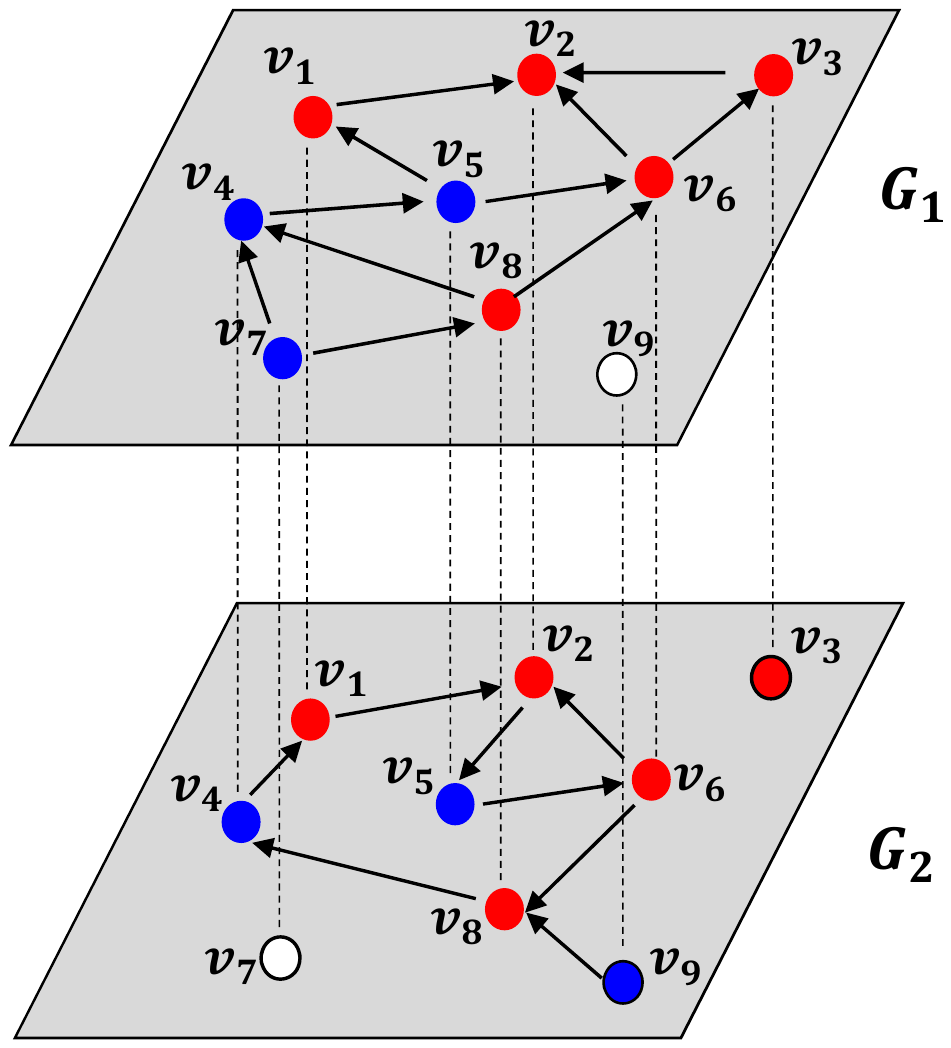} 
    \label{fig:mult_fig3}
  }

  \caption{(a)--(c) An example of influence 
    propagation $\sigma$ in the 
    multiplex structure.  Red nodes are activated, blue and
    white are unactivated nodes.  The white color indicates
    a node added to the layer in order that each network
    has the same vertex set $V$. In (a), we have seed set
    conserting of $v_1$ and $v_6$.  In (b), propagation
  occurs in each layer according to that layers model, and
  in (c), nodes activated in one layer are activated in all,
  whence the propagation may continue.}

  \label{fig:tf}
  \vspace{-16pt}
\end{figure}
\begin{lemma}
  Let $S \subset V$.  Then 
  $\tau_i ( \tau( S ) ) = \tau( S )$ for all $i$. 
  \label{lemma1}
\end{lemma}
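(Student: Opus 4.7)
The plan is to exploit the fact that $\tau(S)$ is a fixed point of the multiplex propagation process. Since each $\sigma_i$ is deterministic, each $\tau_i$ is a well-defined function $2^V \to 2^V$, and since $\sigma_i$ is a model of influence propagation, seed nodes stay activated, giving $\tau_i(A) \supseteq A$ for every $A \subseteq V$.

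First I would make precise the iterative description of the multiplex propagation that the paper gives informally in Fig.~\ref{fig:tf}. Define $A_0 = S$ and $A_{t+1} = \bigcup_{i=1}^{k} \tau_i(A_t)$. Because $\tau_i(A_t) \supseteq A_t$, the sequence $(A_t)$ is monotone non-decreasing in $V$, which is finite, so it stabilizes at some $A_T$, and this stable set is precisely $\tau(S)$: activating an overlapping user in layer $i$ makes its copies activated in the other layers (already encoded in the fact that all $V_i = V$ in this section), and then within-layer propagation in each $G_j$ is exactly one application of $\tau_j$.

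Next I would observe that stabilization is equivalent to $\bigcup_{i=1}^k \tau_i(\tau(S)) = \tau(S)$, i.e.\ one more round of within-layer propagation in any layer produces nothing new. Fix any $i$. On the one hand, $\tau_i(\tau(S)) \supseteq \tau(S)$ from the seed-preservation property. On the other hand, $\tau_i(\tau(S)) \subseteq \bigcup_{j=1}^{k} \tau_j(\tau(S)) = \tau(S)$. The two inclusions give $\tau_i(\tau(S)) = \tau(S)$, which is exactly what we need.

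The only real obstacle is justifying the iterative characterization of $\tau$, since the paper describes the multiplex dynamics verbally rather than by a formula; once that is nailed down, the lemma is immediate from the fixed-point property together with seed preservation. Everything here uses only the definition of deterministic models and the definition of $\sigma$ on the multiplex; in particular we do not need submodularity of the $\sigma_i$ for this lemma, though it will be essential in subsequent arguments.
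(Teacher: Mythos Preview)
Your proposal is correct and follows essentially the same idea as the paper: both argue that $\tau(S)$ is, by definition, the stable set at which no further within-layer propagation can occur, so $\tau_i(\tau(S))=\tau(S)$. The paper's proof is a one-sentence appeal to the termination condition of the multiplex propagation, whereas you make the iterative dynamics $A_{t+1}=\bigcup_i \tau_i(A_t)$ explicit and derive the fixed-point property from stabilization and seed preservation; this is more detailed but not a genuinely different route.
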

\begin{proof}
This follows from definition of propagation in multiplex. If propagation has resulted
in set $\tau( S )$, then propagation cannot proceed further in any layer network,
for otherwise propagation in $\mathscr{G}$ would not have terminated
with $\tau ( S)$. A visualization of an example
of multiplex propagation is shown in Figure \ref{fig:tf} (a)--(c).
\end{proof}
\begin{lemma}
  Let $S, T \subset V$.
    \[ \tau(S) \cup \tau(T) = \tau_i ( \tau(S) \cup \tau(T) ) \]
      for all $i$.
      \label{lemmax}
\end{lemma}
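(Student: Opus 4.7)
The plan is to use the submodularity inequality for $\sigma_i$ on the two sets $\tau(S)$ and $\tau(T)$, combined with Lemma \ref{lemma1}, and squeeze the chain of inequalities into equalities.

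First, the ``easy'' inclusion $\tau_i(\tau(S) \cup \tau(T)) \supseteq \tau(S) \cup \tau(T)$ is immediate: $\tau_i$ is monotone and always contains its input (seeds stay activated), so $\tau_i(\tau(S) \cup \tau(T)) \supseteq \tau_i(\tau(S)) = \tau(S)$ by Lemma \ref{lemma1}, and symmetrically for $T$.

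For the reverse inclusion, write the submodularity of $\sigma_i$ at $A = \tau(S)$ and $B = \tau(T)$:
\[
  \sigma_i(\tau(S) \cup \tau(T)) + \sigma_i(\tau(S) \cap \tau(T)) \;\le\; \sigma_i(\tau(S)) + \sigma_i(\tau(T)).
\]
Because $\sigma_i$ is deterministic in this subsection, $\sigma_i(X) = |\tau_i(X)|$, and Lemma \ref{lemma1} turns the right side into $|\tau(S)| + |\tau(T)|$. On the left side, the property that seed nodes are always activated gives $\tau_i(X) \supseteq X$, hence $\sigma_i(X) \ge |X|$ for every $X$; applying this to both the union and the intersection yields
\[
  |\tau(S) \cup \tau(T)| + |\tau(S) \cap \tau(T)| \;\le\; \sigma_i(\tau(S) \cup \tau(T)) + \sigma_i(\tau(S) \cap \tau(T)).
\]
Since $|\tau(S) \cup \tau(T)| + |\tau(S) \cap \tau(T)| = |\tau(S)| + |\tau(T)|$, the two inequalities together force equality everywhere. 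In particular $\sigma_i(\tau(S) \cup \tau(T)) = |\tau(S) \cup \tau(T)|$, and combined with $\tau_i(\tau(S)\cup\tau(T))\supseteq \tau(S)\cup\tau(T)$ this forces $\tau_i(\tau(S) \cup \tau(T)) = \tau(S) \cup \tau(T)$, which is the claim.

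The only subtle step is recognizing that submodularity of $\sigma_i$ can be ``pinched'' against the trivial lower bound $\sigma_i(X) \ge |X|$ to upgrade a cardinality inequality into a set-equality; the rest is bookkeeping using Lemma \ref{lemma1} and the determinism of $\sigma_i$. Monotonicity of $\sigma_i$ alone would not give the upper bound we need, so submodularity is genuinely used here, which is consistent with the fact that $\sigma_i$ being only monotone (but not submodular, e.g.\ deterministic LT with adversarial thresholds) could allow $\tau_i$ to activate new nodes from a union of fixed points.
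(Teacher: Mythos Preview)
Your proof is correct and follows essentially the same approach as the paper: apply submodularity of $\sigma_i$ to $\tau(S)$ and $\tau(T)$, use Lemma~\ref{lemma1} to simplify the right-hand side, and use the trivial bound $\sigma_i(X)\ge |X|$ on the intersection term. The paper presents it slightly more directly by isolating $\sigma_i(\tau(S)\cup\tau(T)) \le |\tau(S)|+|\tau(T)|-|\tau(S)\cap\tau(T)| = |\tau(S)\cup\tau(T)|$, whereas you phrase it as a squeeze on the sum and then split into termwise equalities, but the content is identical.
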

\begin{proof}
  We have
      \[
        \sigma_i (\tau (S) \cup \tau (T)) + \sigma_i (\tau (S) \cap \tau (T)) \le \sigma_i (\tau (S)) + 
    \sigma_i (\tau (T)) \]
  by submodularity of $\sigma_i$, so
  \begin{align*}
    \sigma_i (\tau (S) \cup \tau (T)) &\le |\tau (S)| + |\tau (T)| - |\tau (S) \cap \tau (T)|  \\
    &= |\tau (S) \cup \tau (T)|, 
  \end{align*}
  by Lemma \ref{lemma1} and since $ \sigma_i (\tau (S) \cap \tau (T))  \ge |\tau (S) \cap \tau (T)|$.  Hence,
  \[ \tau_i (\tau (S) \cup \tau (T)) = \tau (S) \cup \tau (T). \]
\end{proof}
\begin{lemma}
  Let $S, T \subset V$.
  \begin{itemize}
    \item[i)]
      \[\tau ( S \cup T ) = \tau(S) \cup \tau(T) \]
    \item[ii)]
      \[\tau (S \cap T) \subset \tau_i ( \tau (S) \cap \tau (T)) .\]
  \end{itemize}
  \label{lemma0}
\end{lemma}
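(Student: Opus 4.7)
The plan is to prove the two containments in (i) separately and then derive (ii) in a short step, using monotonicity of $\tau$ together with Lemma \ref{lemmax} and the property that seed nodes stay activated. The only ingredient not already stated explicitly in the excerpt is monotonicity of $\tau$, that is, $A \subset B$ implies $\tau(A) \subset \tau(B)$; I would first record this as a preliminary observation, noting that it follows from the corresponding monotonicity of each deterministic submodular layer model by a straightforward induction on the number of multiplex-propagation steps, since each step simply applies some $\sigma_i$ to the current active set on the common vertex set $V$.

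For part (i), the containment $\tau(S) \cup \tau(T) \subset \tau(S \cup T)$ is immediate from monotonicity of $\tau$ applied to $S \subset S \cup T$ and $T \subset S \cup T$. For the reverse containment, set $U = \tau(S) \cup \tau(T)$. By Lemma \ref{lemmax}, $\tau_i(U) = U$ for every $i$, so $U$ is a fixed point of every layer's propagation and hence a terminal state of the full multiplex propagation. Because $S \cup T \subset \tau(S) \cup \tau(T) = U$, the propagation launched from $S \cup T$ cannot leave $U$, giving $\tau(S \cup T) \subset U$, which closes (i).

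For part (ii), I would apply monotonicity of $\tau$ to the inclusions $S \cap T \subset S$ and $S \cap T \subset T$ to obtain $\tau(S \cap T) \subset \tau(S) \cap \tau(T)$. Property (1) in the definition of a model of influence propagation forces every seed set to be contained in its activated set, so $\tau(S) \cap \tau(T) \subset \tau_i\bigl(\tau(S) \cap \tau(T)\bigr)$ for every $i$. Chaining these two containments yields (ii).

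The main obstacle, and the step I would be most careful to justify, is the closure argument in part (i): that a set closed under every $\tau_i$ cannot be escaped by propagation launched from a subset. This is intuitively clear from the iterative description of multiplex propagation, but making it rigorous leans on the implicit monotonicity of each $\sigma_i$, or equivalently a Knaster--Tarski style argument identifying $\tau(S \cup T)$ as the least fixed point of a monotone operator containing $S \cup T$. Once this technical point is settled, the remainder of the proof is short and purely combinatorial.
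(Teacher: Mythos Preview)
Your proof is correct and follows essentially the same route as the paper: both directions of (i) are obtained from monotonicity together with Lemma~\ref{lemmax} (the paper phrases the reverse containment as ``propagation in any layer network $G_i$ cannot proceed beyond $\tau(S)\cup\tau(T)$''), and (ii) is obtained exactly as you do, by monotonicity of $\tau$ and the seed-containment property. The only difference is that you make explicit the monotonicity of $\tau$ and the fixed-point/closure reasoning that the paper leaves implicit.
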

\begin{proof}
  \begin{itemize}
    \item[i)]

  Since $S \cup T \subset \tau (S) \cup \tau (T)$, $\tau (S) \cup \tau (T) \subset \tau( S \cup T)$
  and propagation in any layer network $G_i$ cannot procede beyond $\tau (S) \cup \tau (T)$ by Lemma 
  \ref{lemmax},
  we have $\tau (S \cup T) = \tau (S) \cup \tau (T)$.
  \item[ii)]
    Clearly $\tau (S \cap T) \subset \tau (S)$ and similarly
    $\tau (S \cap T) \subset \tau (T)$, hence
    \[ \tau (S \cap T) \subset \tau (S) \cap \tau (T) \subset \tau_i (\tau (S) \cap \tau (T)), \]
    for any $i$.
  \end{itemize}
\end{proof}

\begin{lemma}
  If the propagation model $\sigma_i$ on each component of $\mathscr{G}$ is deterministic and submodular, the (deterministic) propagation $\sigma$ in 
multiplex $\mathscr{G}$
will be submodular.
\label{det_lemma}
\end{lemma}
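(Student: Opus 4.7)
The plan is to verify the submodularity inequality $\sigma(A)+\sigma(B) \ge \sigma(A\cup B)+\sigma(A\cap B)$ directly, by exploiting the set-identity versions of this inequality that the previous lemmas have already established. Since each $\sigma_i$ is deterministic, the multiplex propagation is also deterministic, so $\sigma(S)=|\tau(S)|$ and I only need to work with the sets $\tau(\cdot)$ rather than expectations.

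First I would rewrite the union term using Lemma \ref{lemma0}(i): $\tau(A\cup B)=\tau(A)\cup\tau(B)$, so that standard inclusion--exclusion gives
\[ |\tau(A\cup B)| = |\tau(A)| + |\tau(B)| - |\tau(A)\cap\tau(B)|. \]
Next I would handle the intersection term. Lemma \ref{lemma0}(ii) as written states $\tau(A\cap B)\subset \tau_i(\tau(A)\cap\tau(B))$, but its proof in fact first establishes the stronger and more useful containment $\tau(A\cap B)\subset \tau(A)\cap\tau(B)$ (because $A\cap B\subset A$ and $A\cap B\subset B$ imply $\tau(A\cap B)\subset\tau(A)$ and $\tau(A\cap B)\subset\tau(B)$, using monotonicity of the deterministic propagation). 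This containment immediately gives $|\tau(A\cap B)|\le |\tau(A)\cap\tau(B)|$.

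Combining the two observations,
\[ \sigma(A\cup B)+\sigma(A\cap B) = |\tau(A)|+|\tau(B)|-|\tau(A)\cap\tau(B)|+|\tau(A\cap B)| \le |\tau(A)|+|\tau(B)| = \sigma(A)+\sigma(B), \]
which is exactly the submodularity inequality.

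There is no real obstacle here beyond bookkeeping: Lemmas \ref{lemma1}, \ref{lemmax}, and \ref{lemma0} have already done the hard work of showing that multiplex propagation from any seed set is ``closed'' under further layerwise propagation and that taking $\tau$ commutes with union. The only subtle point is recognizing that the containment $\tau(A\cap B)\subset \tau(A)\cap\tau(B)$ — which is the intermediate step inside Lemma \ref{lemma0}(ii) — is what is actually needed, rather than the weaker final statement of that lemma. Once that is noted, the rest is a one-line inclusion--exclusion calculation.
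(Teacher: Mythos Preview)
Your argument is correct, and in fact slightly more streamlined than the paper's. The paper proves the lemma by routing both terms through a fixed layer $i$: it bounds $|\tau(S\cap T)|+|\tau(S\cup T)|$ above by $\sigma_i(\tau(S)\cap\tau(T))+\sigma_i(\tau(S)\cup\tau(T))$ (using Lemmas~\ref{lemmax} and~\ref{lemma0}), then applies the submodularity of $\sigma_i$ to the pair $\tau(S),\tau(T)$, and finally uses Lemma~\ref{lemma1} to identify $\sigma_i(\tau(S))=\sigma(S)$. You instead stay entirely at the level of the sets $\tau(\cdot)$: once $\tau(A\cup B)=\tau(A)\cup\tau(B)$ is known from Lemma~\ref{lemma0}(i), inclusion--exclusion plus the containment $\tau(A\cap B)\subset\tau(A)\cap\tau(B)$ finishes the job without a second invocation of layer submodularity or of Lemma~\ref{lemma1}. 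Both approaches ultimately rely on the same preparatory lemmas (the submodularity of $\sigma_i$ is still used inside Lemma~\ref{lemmax}, hence inside Lemma~\ref{lemma0}(i)), but your final combination step is purely set-theoretic and a bit cleaner. Your observation that only the intermediate containment in the proof of Lemma~\ref{lemma0}(ii) is needed, rather than its stated conclusion, is exactly right.
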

\begin{proof}
  Let $i \in \{1, \ldots, n \}$. Then by Lemmas \ref{lemma1}, \ref{lemmax} and \ref{lemma0} and submodularity
  of $\sigma_i$,
  \begin{align*}
    \sigma (S \cap T) + \sigma (S \cup T) & = |\tau( S \cap T )| + |\tau( S \cup T )| \\
    & \le |\tau_i ( \tau (S) \cap \tau (T) )| + |\tau_i ( \tau (S) \cup \tau (T) ) | \\
    &= \sigma_i (\tau (S) \cap \tau (T)) + \sigma_i (\tau (S) \cup \tau (T) ) \\
    & \le \sigma_i ( \tau (S)) + \sigma_i (\tau (T)) \\
    &= \sigma(S) + \sigma(T).
  \end{align*}

  Thus, $\sigma$ is submodular.
\end{proof}
\paragraph{Probabilistic case}
Now the result for the deterministic case is generalized to the case when all networks satisfy GDS.
\begin{theorem}
  Given multiplex network $\mathscr{G}$ with $k$ layer networks $G_i$, 
  if the model $\sigma_i$ on each layer network $G_i$ satisfies GDS
  then $\sigma$
  satisfies GDS.
\label{theorem1}
\end{theorem}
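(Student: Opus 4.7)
The plan is to lift the deterministic result of Lemma \ref{det_lemma} to the probabilistic setting by applying the GDS decomposition in each layer simultaneously and taking the product distribution over the resulting deterministic realizations.

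First, since each $\sigma_i$ satisfies GDS, write
\[ \sigma_i(A) = \sum_{j=1}^{s_i} p^{(i)}_j \, \sigma^{(i)}_j(A), \]
where every $\sigma^{(i)}_j$ is a deterministic submodular model of influence propagation on $G_i$, and the $p^{(i)}_j$ form a probability distribution. For each tuple $\mathbf{j} = (j_1,\ldots,j_k)$ with $1 \le j_i \le s_i$, let $\sigma_{\mathbf{j}}$ denote the propagation model on the multiplex obtained by replacing the layer model $\sigma_i$ with its deterministic realization $\sigma^{(i)}_{j_i}$, while keeping the inter-layer activation rule (overlapping users copy their activation across layers) unchanged. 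Since each $\sigma^{(i)}_{j_i}$ is deterministic, the multiplex propagation $\sigma_{\mathbf{j}}$ is itself deterministic, and by Lemma \ref{det_lemma} it is submodular.

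Next, assign the product weight $p_{\mathbf{j}} = \prod_{i=1}^k p^{(i)}_{j_i}$ to each tuple. These weights sum to $1$ by the factorization $\sum_{\mathbf{j}} p_{\mathbf{j}} = \prod_i \sum_{j_i} p^{(i)}_{j_i} = 1$. The key claim is that
\[ \sigma(A) = \sum_{\mathbf{j}} p_{\mathbf{j}} \, \sigma_{\mathbf{j}}(A), \]
which is exactly the GDS form we need. To see this, recall that the multiplex dynamics alternate between intra-layer propagation (governed by $\sigma_i$) and trivial, deterministic inter-layer copying through overlapping users. One conditions on the independent random realizations of the layers by deferring all randomness: first draw $j_i \sim p^{(i)}$ independently for each $i$, then run the multiplex process using $\sigma^{(i)}_{j_i}$ in each layer. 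Because the inter-layer step is deterministic and the intra-layer randomness in different layers is independent, this deferred-decisions process produces the same distribution over final activated sets as the original $\sigma$, and hence the same expectation $\sigma(A)$.

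The main obstacle is the justification of the deferred-decisions equivalence in the last step: one must argue that interleaving the ``sample the deterministic realization'' step with the propagation does not change the distribution, even though propagation in one layer can trigger later propagation in another via overlapping users. The cleanest way is to observe that the law of $\sigma_i(\cdot)$ applied to any (possibly random) seed set measurable with respect to the other layers' randomness equals the mixture $\sum_j p^{(i)}_j \sigma^{(i)}_j(\cdot)$ by independence, so that conditioning on the full tuple $\mathbf{j}$ up front and then running the deterministic multiplex yields the same expected count. Combined with Lemma \ref{det_lemma} applied to each $\sigma_{\mathbf{j}}$, this exhibits $\sigma$ as a convex combination of deterministic submodular multiplex models, which is precisely GDS.
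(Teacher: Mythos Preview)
Your proposal is correct and follows essentially the same approach as the paper: decompose each layer via GDS, take the product distribution over tuples $\mathbf{j}=(j_1,\ldots,j_k)$ with weight $\prod_i p^{(i)}_{j_i}$, and invoke Lemma~\ref{det_lemma} on each resulting deterministic multiplex $\sigma_{\mathbf{j}}$. If anything, you are more careful than the paper, which simply asserts independence of the layers and applies Lemma~\ref{det_lemma} without spelling out the deferred-decisions justification you provide.
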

\begin{proof}
Since $\sigma_i$ satisfies GDS, with probability $p_{ij}$, $G_i$ has deterministic, submodular propagation $\sigma_{ij}$, such that $\sigma_i = \sum p_{ij}\sigma_{ij}$.
The probability that $\sigma$ will comprise $\sigma_{1j_1},  \sigma_{2j_2}, \ldots, \sigma_{kj_k}$ is
$\prod_{i = 1}^k p_{ij_i}$, since propagation in each graph is independent.  
Let this propagation in $\mathscr{G}$ be labeled $\sigma_{j_1,\ldots,j_k}$. By Lemma \ref{det_lemma},
$\sigma_{j_1,\ldots,j_k}$ is submodular and deterministic.
\end{proof}

\subsubsection{$(1-1/e)$-Approximation Algorithm}
In this section we detail the greedy algorithm \emph{Influential Seed Finder} (ISF) for solving the MIM problem. As shown in Alg. 1, ISF is a greedy algorithm (with CELF++ optimization \cite{goyal2011celf++}), which chooses a node that maximizes the marginal gain of $\sigma$ at each iteration. Recall that $\sigma$
is the expected activation of the influence model $\sigma$ defined on the multiplex
in Section \ref{label:model}, which incorporates the models $\sigma_i$ on each
layer utilizing the overlapping users.  To compute $\sigma$, it is necessary to
compute expected activation $\sigma_i$ on each layer. To perform this computation, we use
independent Monte Carlo simulations -- in general, $\sigma_i$ could be any model
of influence propagation, and thus may not be amenable to specialized techniques for triggering model \cite{Tang2014}.

As shown earlier, $\sigma$ is submodular and monotone increasing when each individual 
network satisfies 
GDS; therefore, in this case ISF has an 
approximation ratio of $(1 - 1/e)$ \cite{nemhauser78}.
The time complexity of ISF is $O(nl(m+n) \log n)$ where 
$n, m$ are number of users, friendships in the
multiplex of OSNs, respectively.  Each Monte Carlo
simulation takes time $\Omega ( n + m )$, and the $\log n$ factor accounts for the time to adjust the priority queue,
$l$ is the size of the seed set chosen.

\begin{algorithm}
  \caption{ISF: An algorithm for finding the best seed users. Approximation ratio: $1 - 1/e$ when
each layer satisfies GDS property}
\textbf{Input:} A multiplex $\mathscr{G} = (G^1, G^2, \ldots, G^k)$, $l$\\
  \textbf{Output:} Seed set $S$ of size $l$
  \begin{algorithmic}[1]
  \State Renumber all the nodes across all networks so that each node gets a unique id
  \State $S \leftarrow \emptyset$
  \State $V \leftarrow \cup_{i=1}^{k} V_i$
  \For{each $v \in V$}
	\State $v.marginal\_gain =   	\sigma(v)$ 
  	\State $v.round = 0$	
  \EndFor
  \State Initialize max priority queue $Q$ with (key,value) pair $(v,v.marginal\_gain)$, $\forall v \in V$
  \State Initialize previous marginal gain, $prev\_mg = 0$
  \While{$|S| \le l$}
  	\State $v \leftarrow Q.pop().key$
  	\If{$v.round == S.size$}
  		\State $S \leftarrow S \cup \{v\}$
  		\State $prev\_mg \leftarrow prev\_mg + v.marginal\_gain$
  	\Else
  		\State $v.marginal\_gain \leftarrow \sigma(S \cup \{v\}) - prev\_mg$
  		\State $v.round = S.size$
  		\State $Q.add(v,v.marginal\_gain)$	
  	\EndIf 
  \EndWhile
  \State \textbf{Return } $S$
	\end{algorithmic}
	\label{alg:greedyHMI}
\end{algorithm}

\subsection{Parallelizable multiplex algorithm}
Although in the case that the model of
propagation on each layer satisfies
GDS, we have the $(1 - 1/e)$ performance guarantee 
of the greedy ISF, the 
running time
of ISF may be impractical for large 
network sizes;
hence we propose Alg. \ref{alg:KSN} (KSN), another approximation algorithm which
parallelizes the problem in 
terms of the component layers -- the difficulty lies in combining
the solutions to the influence maximization problem on the separate layers
to obtain a solution for MIM.  KSN achieves this by approximating the solution to
multiple-choice knapsack problem.
The approximation ratio of KSN depends on the number of
overlapping users $o$, the number of layers $k$,
an arbitrary $\epsilon > 0$, and the ratio $\alpha$
of its input homogeneous layer algorithm $A$.

\subsubsection{Description of KSN} 
\begin{figure}%
\centering
\begin{tabular}{|c|c|c|c|c|c|} \hline
          - & 0 & 1 & 2 & 3 & 4 \\ \hline
  $G_1$ & 0 & 200 & 350 & 400  & 425   \\ \hline
  $G_2$ & 0 & 600 & 601 & 602 & 603  \\ \hline
  $G_3$ & 0 & 200 & 210 & 214 & 214   \\ \hline
\end{tabular}
\caption{An example of how KSN works, as described in the text.}
\label{fig:ksn-table}%
\end{figure}
The KSN algorithm takes as input an algorithm $A$ (with ratio $\alpha$) to solve the influence maximization problem on a single layer network, a multiplex network 
$\mathscr{G}$ with $k$ layers, and number of seeds to pick $l$.  For each $j \in \{1, \ldots, l \}$,
$i \in \{1, \ldots, k \}$, algorithm $A$ is run in parallel on each $G_i$ to 
get seed sets $T_{ij}$ with $j$ seed nodes.
It then uses an approximation to the multiple-choice knapsack problem (defined below) to decide 
how many nodes should be seeded in each layer, i.e. for each $i$, which $T_{ij}$ to pick.

For example, suppose we have a multiplex with three layers: $G_1,G_2,G_3$. Using 
algorithm $A$, we generate the table in Fig. \ref{fig:ksn-table}, where
the $(i,j)$th entry gives the activation of
seeding $j$ nodes in layer $G_i$. We then use an algorithm for multiple-choice knapsack to
choose for each layer $G_i$, the number $j_i$ of nodes to seed in that layer.

\begin{algorithm}
  \caption{Knapsack Seeding of Networks (KSN): A knapsack approach to finding the best seed users. Approximation ratio: $\frac{(1 - \epsilon) \alpha}{(o + 1)k}$, where $\epsilon > 0$, $o$ is the number of overlapping users, $k$ is the number of layers, and $\alpha$ is the ratio of algorithm $A$ on homogeneous networks}
  \textbf{Input:} Algorithm $A$, a multiplex network $\mathscr{G} = (G^1, G^2, \ldots, G^k)$, $l$\\
  \textbf{Output:} Seed set $T$ of size $l$
  \begin{algorithmic}[1]
  \For{$i \in \{1, \ldots k \}$}
      \State Run algorithm $A$
      on $G_i$ with input $j$
      to get seed sets $T_{i1}, T_{i2}, \ldots, T_{il} \subset G_i$, with $| T_{ij} | = j$.
  \EndFor
  \State For each $T_{ij}$, let cost $c(T_{ij}) = | T_{ij} |$, and profit $p(T_{ij}) = \sigma ( T_{ij} )$.
  \State Use $(1 - \epsilon)$-approximation to MCKP to
  choose for all $i$, $T'_{i} \in \{ T_{i1}, \ldots, T_{il} \}$, which choice satisfies $\sum_{i = 1}^k |T'_i| = l$.
  \State \textbf{Return } $T = \bigcup_i T'_i$.
  \end{algorithmic}
	\label{alg:KSN}
\end{algorithm}
\paragraph{Worst-case bound on performance of KSN}
\label{KSN-perf}
First, we need the definition of the multiple-choice
knapsack problem.
\begin{mydef}[Multiple-choice knapsack problem (MCKP)]
Let $(\mathscr{C}, k, l, c, p, B)$ be given, where
$\mathscr{C} = \{C_1, \ldots C_k \}$ 
comprises $k$ classes of $l$ objects, $C_i = \{ x_{ij} : 1 \le j \le l \}$, $c$ and $p$ are cost and profit
functions on objects $x_{ij}$, and budget $B \ge 0$.  
The multiple-choice
knapsack problem (MCKP) is to pick one item from each class, $x_i'$ such that 
profit $\sum_{i = 1}^k p(x_i')$ is maximized under the constraint $\sum_{i = 1}^k c(x_i') < B$.
\end{mydef}

For $\epsilon > 0$, MCKP has a $(1 - \epsilon)$-approximation as shown in \cite{mckp}. We will use this algorithm
to obtain an approximation for MIM as follows.
Let an instance $(\mathscr{G}, k, l)$ of MIM be given.  
For each pair $(i,j)$, $1 \le i \le k, 1 \le j \le l$,
let $T^{opt}_{ij}$ be an optimal seed set for 
$G_i$ satisfying the two conditions 
$T^{opt}_{ij} \subset G_i$ and $|T^{opt}_{ij}| = j$.
In addition, let $T_{ij}$ be the approximation from 
algorithm $A$
to $T^{opt}_{ij}$.  That is, $T_{ij} \subset G_i$, 
$|T_{ij}| = j$, and 
\[ \sigma( T_{ij}^{opt} ) \le \alpha^{-1} \sigma ( T_{ij} ). \]

Then let $C_i = \{ T_{i0}, \ldots, T_{il} \}$, 
$C^{opt}_i = \{ T^{opt}_{i0}, \ldots, T^{opt}_{il} \}$.
Finally, let $\mathscr{C} = \{ C_i : 1 \le i \le k \}$,
$\mathscr{C}^{opt} = \{ C_i^{opt} : 1 \le i \le k \}$,
and for each $i, j$, define $c ( T_{ij} ) = j$, $p ( T_{ij} ) = \sigma ( T_{ij} )$, and likewise define
$c, p^{opt}$ for each $T_{ij}^{opt}$.
Thus, we have two instances of the knapsack problem,
namely $I_1 = (\mathscr{C}, k, l, c, p, l)$ and  
$I_2 = (\mathscr{C}^{opt}, k, l, c, p^{opt}, l)$.

\begin{lemma}
  Let $Opt_{I_i}$ be the value of the optimal solution
  to MCKP instance $I_i$, $i \in \{1, 2 \}$. Then
  \[ \alpha Opt_{I_2} \le Opt_{I_1}. \]
  \label{Lemma_mckp}
\end{lemma}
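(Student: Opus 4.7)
The plan is to transfer an optimal solution of $I_2$ into a feasible solution of $I_1$ coordinate-by-coordinate, then apply the per-layer approximation guarantee of $A$. The key observation is that the two MCKP instances share the same class structure, costs, and budget; they differ only in the profits of the items, and the items themselves are in bijection via the index pair $(i,j)$.

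First, I would fix an optimal choice for $I_2$: let $j_1^*, \ldots, j_k^*$ be indices with $\sum_{i=1}^k j_i^* \le l$ such that
\[ \mathrm{Opt}_{I_2} \;=\; \sum_{i=1}^k p^{\mathrm{opt}}(T^{\mathrm{opt}}_{i j_i^*}) \;=\; \sum_{i=1}^k \sigma(T^{\mathrm{opt}}_{i j_i^*}). \]
Because $c(T_{ij}) = j = c(T^{\mathrm{opt}}_{ij})$ for every pair $(i,j)$, the selection $\{T_{i j_i^*} : 1 \le i \le k\}$ is itself feasible for $I_1$: it picks one item from each class $C_i$ and satisfies the same budget inequality $\sum_i c(T_{i j_i^*}) = \sum_i j_i^* \le l$.

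Next, I would invoke the approximation guarantee of $A$ coordinate-wise. For each $i$, the definition of $T_{ij_i^*}$ gives $\sigma(T^{\mathrm{opt}}_{i j_i^*}) \le \alpha^{-1} \sigma(T_{i j_i^*})$, equivalently $\alpha \sigma(T^{\mathrm{opt}}_{i j_i^*}) \le \sigma(T_{i j_i^*})$. Summing over $i$ yields
\[ \alpha \cdot \mathrm{Opt}_{I_2} \;\le\; \sum_{i=1}^k \sigma(T_{i j_i^*}) \;=\; \sum_{i=1}^k p(T_{i j_i^*}). \]
Finally, since the right-hand side is the profit of a feasible solution of $I_1$, it is at most $\mathrm{Opt}_{I_1}$, yielding the claim.

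There is no serious obstacle here: the argument rests entirely on the fact that the MCKP instances $I_1$ and $I_2$ have identical feasibility structure and that the profit of $T_{ij}$ dominates $\alpha$ times the profit of $T^{\mathrm{opt}}_{ij}$ item-by-item. The only subtlety worth flagging is the need to check that no cross-layer interaction (such as overlapping users inflating activation when the chosen sets are merged) is being used in the argument; it is not, because both instances evaluate profits per class via $\sigma$ restricted to a single layer $G_i$, matching the definition of $p$ and $p^{\mathrm{opt}}$.
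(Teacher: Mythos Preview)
Your proof is correct and follows essentially the same argument as the paper: take the optimal index selection for $I_2$, transfer it to $I_1$ (where it remains feasible because the costs coincide), apply the per-class approximation guarantee $\alpha\,\sigma(T^{\mathrm{opt}}_{ij}) \le \sigma(T_{ij})$, and bound the resulting profit by $\mathrm{Opt}_{I_1}$. The paper's presentation is slightly terser but structurally identical.
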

\begin{proof}
  Suppose 
  $\{T_{ib_i} : 1 \le i \le k \}$, $\{ T^{opt}_{ia_i} : 1 \le i \le k \}$ are the optimal
  solutions for $I_1, I_2$, respectively.  Then 
  \begin{align*}
    \alpha Opt_{I_2} &= \alpha \sum_{i} \sigma( T^{opt}_{ia_i} ) \\
    &\le \sum_i \sigma (T_{ia_i}) \\
    &\le \sum_i \sigma (T_{ib_i} ) \\
    &= Opt_{I_1}, \\
  \end{align*}
  since algorithm $A$'s selection of $T_{ia_i}$ ensures
  $\sigma (T_{ia_i} ) \ge \alpha \sigma (T^{opt}_{ia_i})$. The last
  inequality follows from the fact that $\{ T_{ia_i} \}$ is a feasible
  solution to instance $I_1$, and $\{ T_{ib_i} \}$ is the optimal solution
  to $I_1$.
\end{proof}
\begin{theorem} Let $A$ be an $\alpha$-approximation to the
  problem of influence maximization on a homogeneous single 
  layer, and let $o, k$ be the number of overlapping users and layers, respectively in the
  multiplex.  Furthermore, suppose the propagation $\sigma_i$
  on each layer of the multiplex is submodular.
  Then, KSN has approximation ratio
  $\frac{ (1 - \epsilon)\alpha }{(o + 1)k }$.
\end{theorem}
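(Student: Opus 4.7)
My plan is to bound $\sigma(T)$ from below by chaining three inequalities through the two auxiliary MCKP instances $I_1$ and $I_2$ defined above. Let $S^{\ast}$ be an optimal MIM seed set with $|S^{\ast}|\le l$, put $OPT=\sigma(S^{\ast})$, and let $T=\bigcup_i T_i'$ be the set returned by KSN. The three ingredients I need are: (i) an upper bound $OPT\le(o+1)k\,Opt_{I_2}$, (ii) Lemma \ref{Lemma_mckp}'s $\alpha\,Opt_{I_2}\le Opt_{I_1}$, and (iii) the $(1-\epsilon)$-approximation to MCKP from \cite{mckp} invoked at Line 6 of Algorithm \ref{alg:KSN}, which supplies $\sum_i p(T_i')\ge(1-\epsilon)\,Opt_{I_1}$. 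Composing these and then passing from the MCKP objective back to $\sigma(T)$ yields the claimed ratio.

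The heart of the argument is (i). The key observation is that any user activated in the multiplex from $S^{\ast}$ must appear in at least one layer $V_i$, and the users activated in $V_i$ are contained in the single-layer activation set one obtains by running $\sigma_i$ on the augmented seed set $(S^{\ast}\cup O)\cap V_i$ -- this is the worst case in which cross-layer propagation has pre-activated every overlapping user in that layer. Summing over layers gives $\sigma(S^{\ast})\le\sum_i\sigma_i((S^{\ast}\cup O)\cap V_i)$. Submodularity of each $\sigma_i$ then splits each summand as $\sigma_i(S^{\ast}\cap V_i)+\sigma_i(O\cap V_i)$. Monotonicity and $|S^{\ast}\cap V_i|\le l$ give $\sigma_i(S^{\ast}\cap V_i)\le\sigma_i(T^{opt}_{i,l})$, while subadditivity applied singleton-by-singleton across $O\cap V_i$ gives $\sigma_i(O\cap V_i)\le o\,\sigma_i(T^{opt}_{i,1})$. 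Because both ``all $l$ seeds placed in layer $i$'' and ``a single seed placed in layer $i$'' are feasible $I_2$-configurations, each of $\sigma_i(T^{opt}_{i,l})$ and $\sigma_i(T^{opt}_{i,1})$ is individually bounded by $Opt_{I_2}$; summing across the $k$ layers then delivers $OPT\le k\,Opt_{I_2}+o\,k\,Opt_{I_2}=(o+1)k\,Opt_{I_2}$.

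Chaining (i)--(iii) gives $\sum_i p(T_i')\ge\frac{(1-\epsilon)\alpha}{(o+1)k}\,OPT$. The closing step uses monotonicity of the multiplex activation together with $T\supseteq T_i'$: the multiplex propagation from $T$ subsumes, in each layer $i$, the single-layer propagation from $T_i'$, so $\sigma(T)$ dominates the aggregate MCKP profit and inherits the ratio.

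The principal obstacle is step (i). Multiplex propagation resists a clean per-layer decomposition because overlapping users form a coupling through which influence can oscillate between layers, so one must pay for that coupling somewhere. The device I would use -- pessimistically treating every overlapping user in each layer as already seeded, then invoicing that pessimism via submodularity of $\sigma_i$ as at most $o$ singleton seeds per layer -- is precisely what produces the factor $(o+1)$, and summing across $k$ layers while recognising both resulting summands as feasible $I_2$-configurations is what produces the factor $k$. A careless application of $\max_i\ge\tfrac{1}{k}\sum_i$ at the closing monotonicity step would introduce an additional $k$, so one must argue directly that the union of activations across the layers populated by $T$ suffices to lift the MCKP bound back to $\sigma(T)$ without further loss.
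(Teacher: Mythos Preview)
Your proposal is correct and follows essentially the same approach as the paper: both decompose the multiplex optimum layer-by-layer via $\sigma(S_{opt})^i \le \sigma_i(S_{opt} \cup O) \le \sigma_i(S_{opt}) + \sigma_i(O)$ using submodularity of $\sigma_i$, bound each resulting summand by a feasible $I_2$-configuration (which produces the $(o+1)k$ factor after summing over layers and over the $o$ singletons in $O$), and then chain Lemma~\ref{Lemma_mckp} with the $(1-\epsilon)$-MCKP approximation. Your write-up is somewhat more explicit than the paper's about intersecting seed sets with $V_i$ and about the closing passage from the MCKP objective back to $\sigma(T)$, but the structure and the key inequalities are identical.
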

\begin{proof}
  Suppose KSN returns the union of
  $T_{1a_1}, T_{1a_2}, \ldots, T_{1a_k}$, selected from
  $I_1$. Let $S_{opt}$ be the optimal solution to MIM instance $(\mathscr{G}, k, l)$. 
  Let $\sigma ( S_{opt} )^i$ denote the expected activation under $\sigma$ in layer
  $G_i$. Immediately, we have
  \begin{equation}
    \sigma ( S_{opt} ) \le \sum_{i = 1}^k \sigma (S_{opt})^i. \label{ineq1}
  \end{equation}
  Also, letting $O$ be the set of overlapping users, we have
  \begin{equation}
    \sigma ( S_{opt} )^i \le \sigma_i (S_{opt} \cup O ) \le \sigma_i(S_{opt}) + \sigma_i(O), \label{ineq2}
  \end{equation}
  where the first inequality in (\ref{ineq2}) follows from 
  the fact that any activation in $G_i$ proceeds according
  to the model $\sigma_i$ and results from seed nodes in $S_{opt} \cap G_i$ or through overlapping users
  $O$. The second inequality in (\ref{ineq2}) follows from submodularity
  of $\sigma_i$.

  Recall that $OPT_{I_j}$ denotes the optimal value of MCKP on instance
  $I_j$, for $j = 1,2$ as defined above; 
  let $KSN$ denote the value of the solution
  returned by Alg. KSN. Then,
  $\frac{1}{1 - \epsilon } KSN \ge OPT_{I_1}$; finally, notice if $S$ is any set of size
  at most $l$, and $i$ is a fixed layer,
  \begin{equation} 
    \frac{1}{(1 - \epsilon) \alpha }KSN \ge OPT_{I_2} \ge \sigma_i( S ), \label{ineq3}
  \end{equation}
  by Lemma \ref{Lemma_mckp}, and since $\sigma_i (S)$ is the value of a feasible solution to MCKP instance 
  $I_2$.
  Therefore, by (\ref{ineq1}), 
  (\ref{ineq2}), and (\ref{ineq3}), we have
  
  \begin{align*}
    \sigma ( S_{opt} ) &\le \sum_{i = 1}^k \sigma (S_{opt})^i \\
    &\le \sum_{i = 1}^k \sigma_i (S_{opt} ) + \sum_{i = 1}^k \sigma_i (O) \\
    &\le \frac{k}{(1 - \epsilon ) \alpha }  KSN + \sum_{i = 1}^k \sigma_i ( O ) \\
    &\le  \frac{k}{(1 - \epsilon ) \alpha } KSN + \sum_{v \in O} \sum_{i = 1}^k \sigma_i ( v ) \\
    &\le  \frac{k}{(1 - \epsilon ) \alpha } KSN +  \frac{ok}{(1 - \epsilon ) \alpha } KSN \\
    &\le \frac{(o + 1)k}{( 1 - \epsilon )\alpha} KSN.
  \end{align*}

\end{proof}

\paragraph{Time complexity of KSN}
\label{KSN-tc}
KSN runs algorithm $A$ in parallel 
$l \cdot k$ times, then employs
the $(1 - \epsilon)$ MCKP algorithm from \cite{mckp}.
Thus if $tc(A, G_i, j)$ is the time complexity of $A$ on $j$
seed nodes with graph $G_i$, 
the time complexity of KSN is
\[ O \left( \max_{(i,j) = (1,1)}^{(k,l)} \, tc(A, G_i, j) + (kl)^{\lceil 1/\epsilon - 1 \rceil} \log k  \right) , \]
since $O( (kl)^{\lceil 1/\epsilon - 1 \rceil} \log k )$ is the time complexity 
for the $(1 - \epsilon)$ MCKP algorithm with $k$ classes and $l$ items in each class.

Notice that the scalability of KSN depends on the scalability of the input
algorithm $A$.  For example, in the special case that each $\sigma_i$
satisfies the triggering model \cite{Kempe2003} and 
also satisfies each $\sigma_i$ submodular, then letting
algorithm $A$ be the TIM algorithm from \cite{Tang2014},
we would have the expected running time of KSN bounded by
\[ O((k + \ell)(m + n) \log n / \epsilon^2  + (kl)^{\lceil 1/\epsilon - 1 \rceil} \log k ), \]
where $n$ is the maximum number of nodes in a layer, $m$
is maximum number of edges in a layer, $\ell$ is an integer; and approximation ratio
\[ \frac{ (1 - \epsilon) (1 - 1/e - \epsilon ) }{ (o + 1)k } \]
with probability $(1 - n^{-\ell})^k$.

\section{Experimental Results}\label{label:experiment}
In this section, we perform experiments on both synthesized and real-world networks to show 
the effectiveness of the proposed algorithms. 
\subsection{Methodology}
We evaluated the following algorithms:
\begin{itemize}
  \item ISF (Alg. \ref{alg:greedyHMI}), the greedy algorithm with CELF++ optimization on the multiplex,
  \item KSN (Alg. \ref{alg:KSN}), with
    algorithm $A$ in the definition
    of Alg. \ref{alg:KSN} is set to the 
    CELF++ algorithm \cite{goyal2011celf++} or
    the IMM algorithm \cite{Tang2014},
  \item Even Seed (ES), which seeds 
    each layer of the multiplex with an equal number 
    of seed nodes $l / k$,
  \item Best Single Network (BSN), which
    places all $l$ seed
    nodes in the layer that maximizes
    $\sigma_i ( S_i )$, where $S_i$ is the seed set 
    chosen according
    to CELF++ in layer $i$, with $|S_i| = l$.
\end{itemize}
To estimate the expected activation $\sigma$
on the multiplex, or $\sigma_i$ in layer $G_i$, we use
independent Monte Carlo simulations.

Since the greedy CELF++ approach is not 
very scalable, we limit the maximum 
length of a diffusion sequence to 4 in Sections \ref{sect:synth}, \ref{sect:real}. The experiments 
in Sections \ref{sect:synth}, \ref{sect:real} were run on a machine with
an Intel(R) Xeon(R) W350 CPU and 12 GB RAM.

In Section \ref{sect:ksn-scale}, we demonstrate
the scalability of KSN on large multiplexes. 
This implementation\footnote{Source code available at \url{http://www.alankuhnle.com/papers/mim/mim.html}} of KSN 
is parallelized and utilizes algorithm $A$ set to 
the IMM algorithm of Tang et al. \cite{Tang2015}.
We chose IMM since it is highly scalable and source
code is available to solve the single layer problem
with both IC and LT models. For the MCKP problem within
KSN,
we used our implementation of
the 1/2-approximation from Chandra et al. \cite{mckp}.
These experiments were run on a machine with 
2 Intel(R) Xeon(R) CPU E5-2697 v4 @ 2.30GHz 
and 384 GB RAM.

\subsection{Synthesized multiplexes} \label{sect:synth}
We consider synthesized multiplexes based on
three scale-free networks $H_1$, $H_2$ and $H_3$ generated
according to Barabasi-Albert model \cite{barabasi1999emergence} with 1000 nodes and 4000 edges, with average degree 4; the exponent in the
power-law degree distribution
generated by this method is 2, which is consistent with that observed
in real-world social networks which have exponents 2 -- 3,
hence these synthesized networks 
should act as a good representative for capturing the social influence 
spread phenomenon. We assigned $H_1$, $H_2$ and $H_3$ with diffusion
models LT, IC and MLT respectively, with edge weights
and thresholds chosen uniformly in $[0,1]$,
where MLT is the deterministic, nonsubmodular majority linear threshold
model \cite{Chen2008}, whereby a node is activated if majority
of its neighbors are activated; then,
to form the multiplexes, beginning with a specified number of
overlapping users $o$,
we select the overlapping users 
randomly, such that each overlapping user is in
all three of the layers; that is, to create an overlapping node, we
randomly choose indices from each layer, and add two interlayer edges
to connect these three separate users into a single overlapping user.
This step is repeated until we have $o$ overlapping users in the multiplex.
A multiplex created in this way will be called a scale-free (SF)
multiplex, and will be denoted $\mathscr{H}_o$,
where $o$ is the number of overlapping users. 

\begin{figure*}
  \centering 
  \subfigure[Overlap = 0 (Scale-free, $\mathscr{H}_0$)] {
		\includegraphics[width=0.25\textwidth]{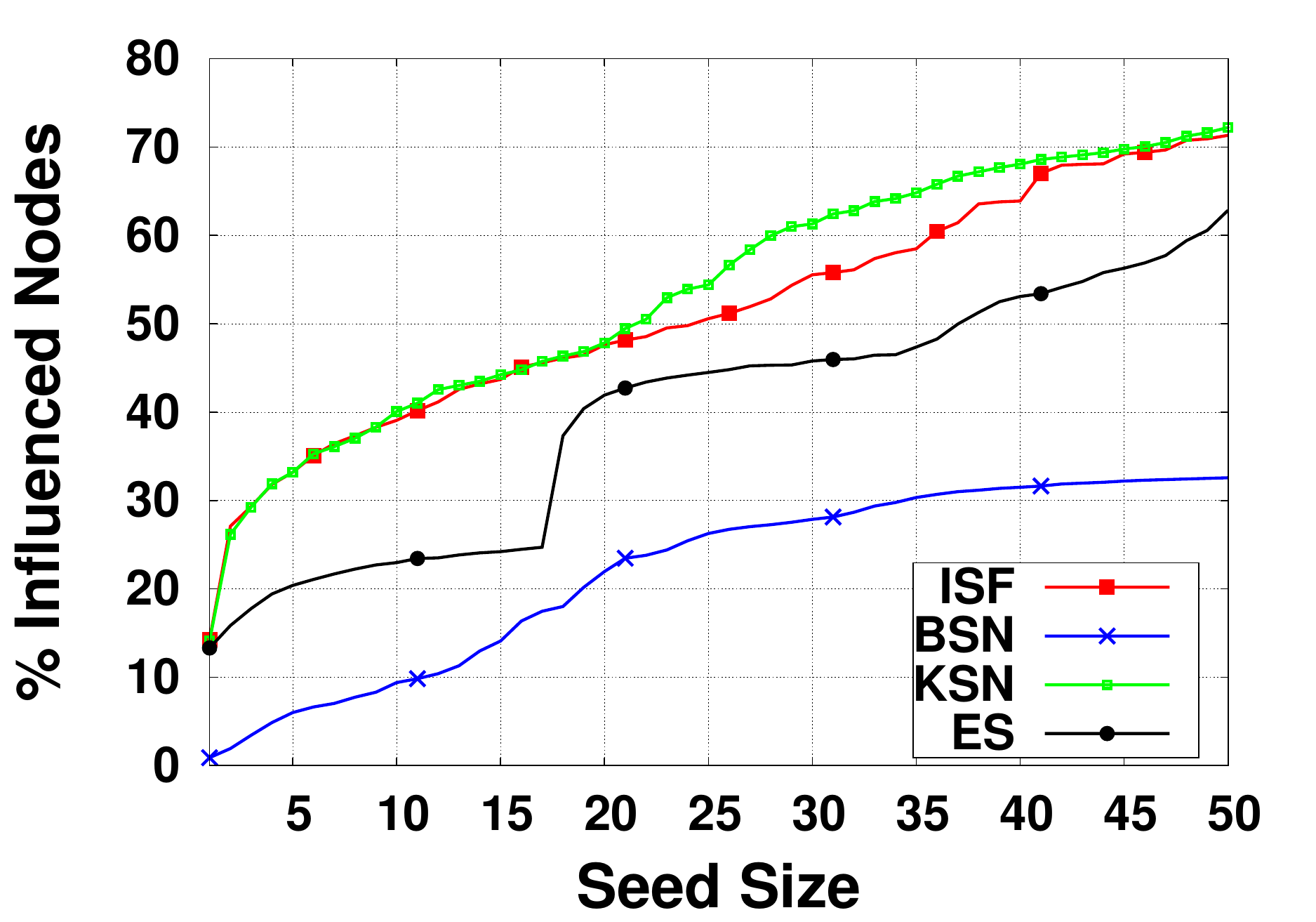}
		\label{fig:sf_syn_0ov}
	}
  \subfigure[Overlap = 500 (Scale-free, $\mathscr{H}_{500}$)] {
		\includegraphics[width=0.25\textwidth]{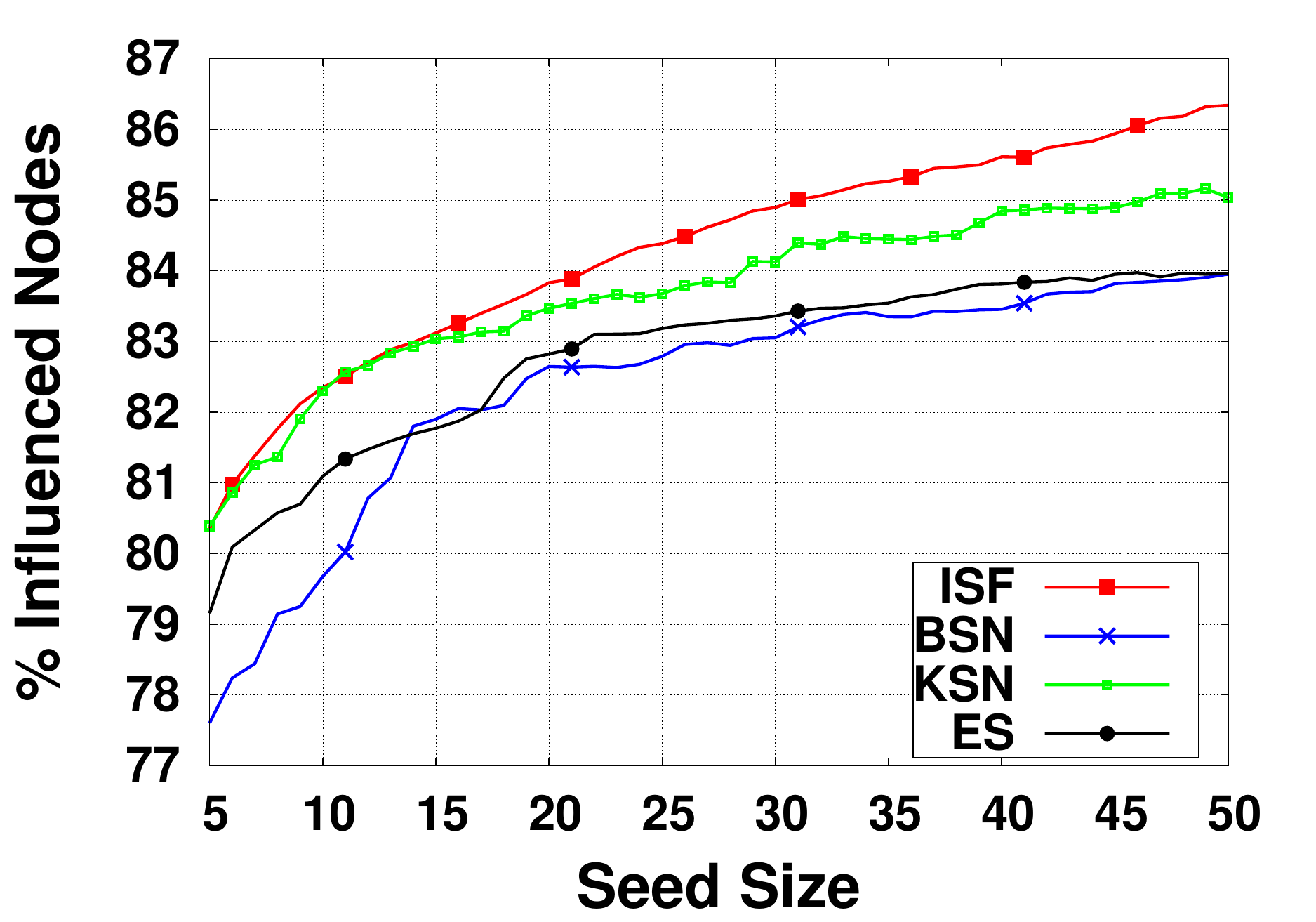}
		\label{fig:sf_syn_500ov}
	}

  \subfigure[Scale-free, $ \mathscr{H} $] {
		\includegraphics[width=0.25\textwidth]{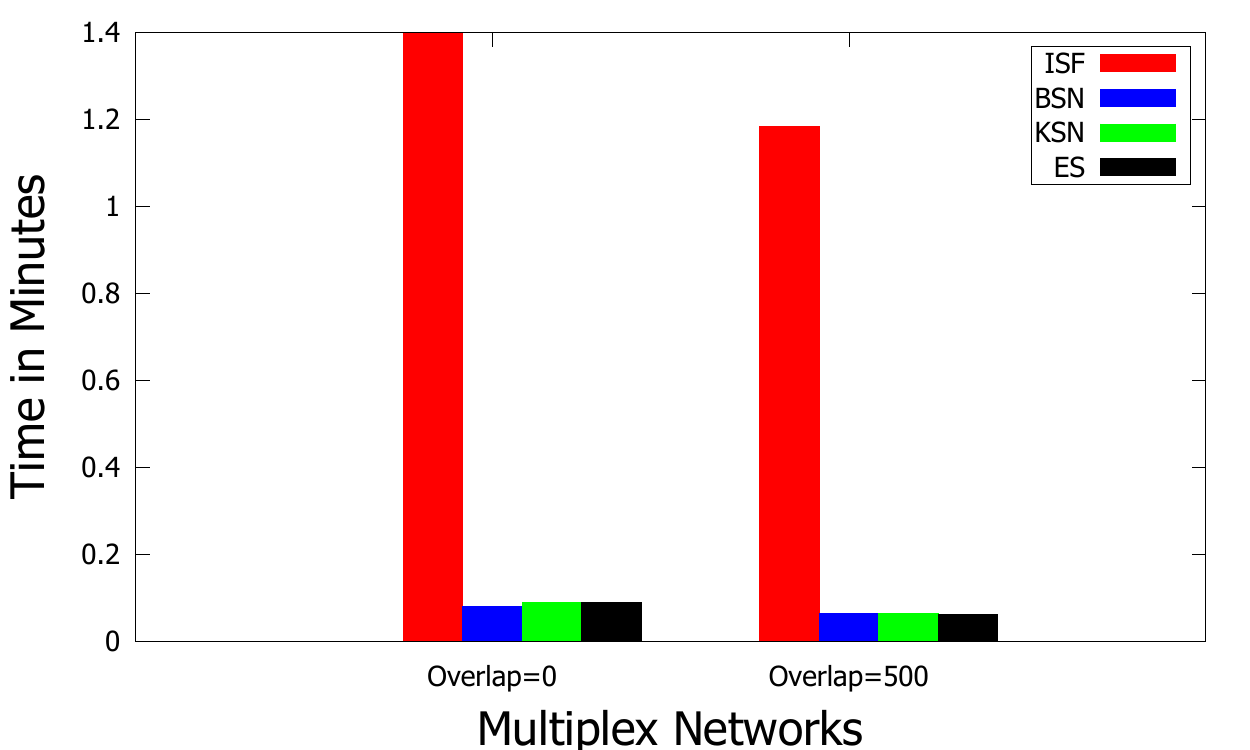}
		\label{sf_time}
    }
    \subfigure[CM-Het-NetS] {
		\includegraphics[width=0.25\textwidth]{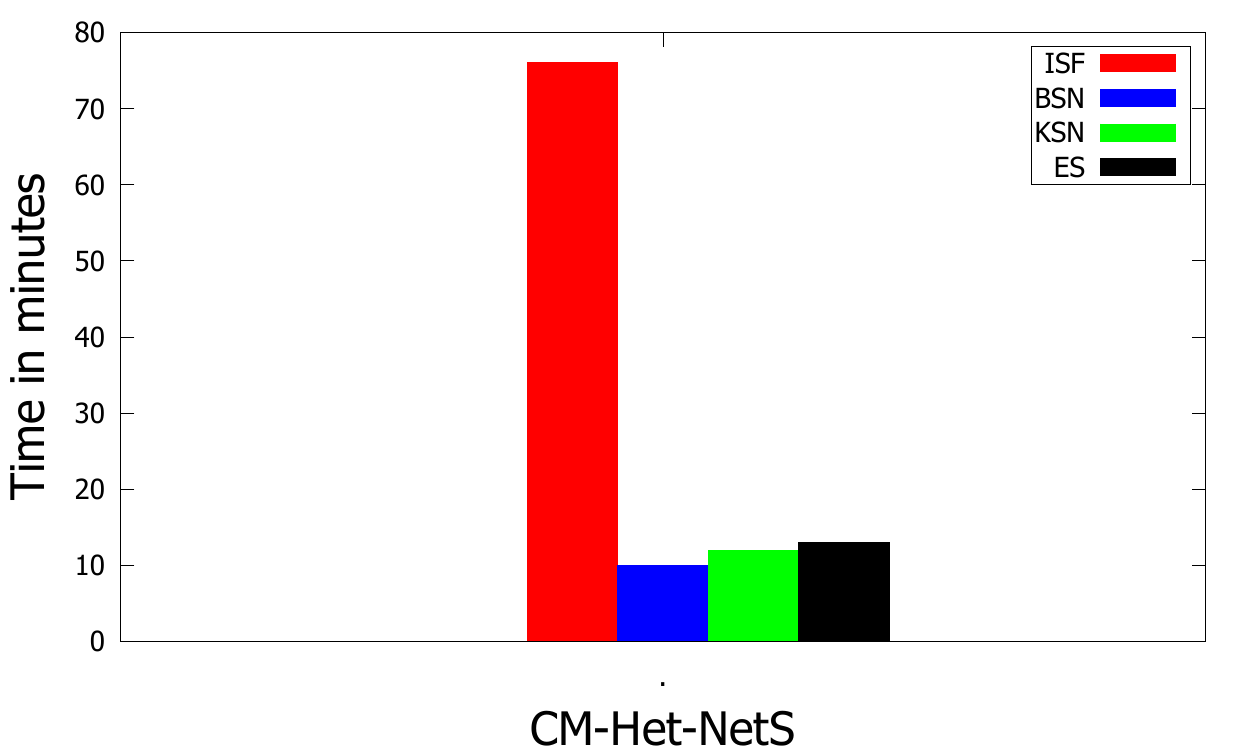}
		\label{chn_time}
    } 	
    \subfigure[Twitter-FSQ] {
		\includegraphics[width=0.25\textwidth]{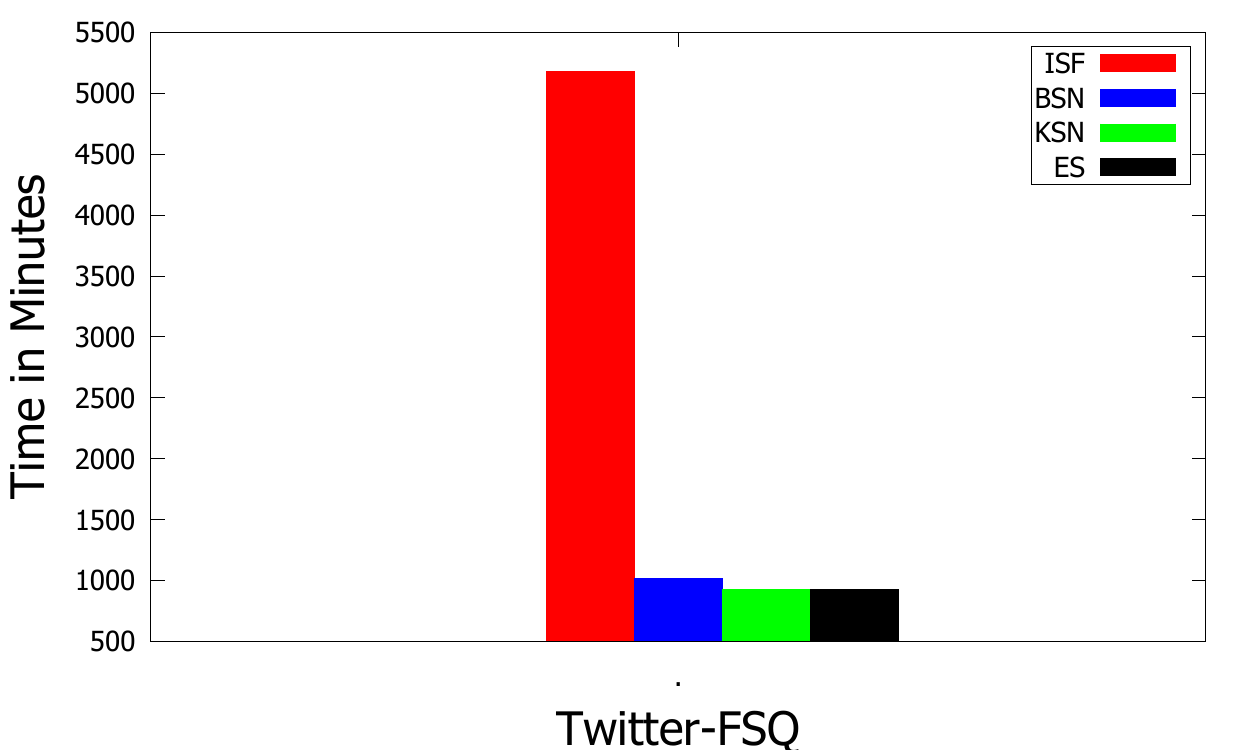}
		\label{ft_time}
    } 	

    \label{fig:run_time}
    \caption{
    (a), (b) Comparison of the four methods with no overlap and
    500 overlapping users, respectively, in the scale-free multiplex setting.
    (c)--(e) Running time comparison for the four algorithms on different multiplexes.}
    \vspace{-12pt}
\end{figure*}

\subsubsection{Algorithm performance on synthesized multiplex networks}

The performance of all four algorithms on the scale-free
multiplex $\mathscr{H}_0$ with no overlapping users
is shown in Fig. \ref{fig:sf_syn_0ov}.
As may be suggested by the analysis of the performance ratio
for KSN, which depends on the number of overlapping users
$o = 0$, the performance of KSN equals
or exceeds ISF. ES, where seeds are split evenly among
the three layers, requires 40 seed users to get
comparable activation to KSN and ISF at 20 seed users.
BSN does the worst since it is choosing seed users from a single layer --
since all three layers have 1000 nodes, BSN 
cannot activate more than $33\%$ of the multiplex,
which it approaches as the number of seed nodes $l \ge 35$.

Next, we considered the $\mathscr{H}_{500}$, the 
multiplex with the same layers but 500 overlapping users,
which is 1/6 of the original 3000 users. The performance
is shown in Fig. \ref{fig:sf_syn_500ov}. In the case
of this significant overlap, ISF outperforms KSN.
BSN is no longer limited to activation of at most
$33\%$ and performs similarly to ES.

These results on the synthesized scale-free multiplex
demonstrate how, in the case of small overlap we
expect KSN to perform as well or better than ISF;
however, as overlap increases, the performance
of KSN will degrade with respect to ISF, a 
statement that we have demonstrated theoretically
in section \ref{KSN-perf}.
\subsubsection{Running time}
In Fig. \ref{sf_time}, we compare the running time for the four algorithms on
$\mathscr{H}_0$ and $\mathscr{H}_{500}$.  
The effect of parallelization by layer of the
multiplex on the running time may be seen; 
note that for KSN, BSN, and ES, we are
using CELF++ on each layer.  This algorithm could be 
replaced
by a more efficient single layer algorithm, which would further improve running time as compared to ISF.  

\subsubsection{Role of overlapping users}
\paragraph{On total activation}
To investigate the role of overlapping users further, we varied the number of overlapping users from 50 to 400 in
the ER multiplex setting. 
The effect of this on the total activation of ISF
can be seen in Fig \ref{fig:er_syn_ov_inf_spread_heat}. Increasing the number of overlapping users increases the number of influenced users. 

\paragraph{On performance of KSN}
We have already noticed, both from theoretical
and experimental standpoints, that the performance
of KSN with respect to the optimal solution is
expected to degrade as the number of overlapping users
increases.  In this section, we examine how the
performance of KSN compares with itself on the
scale-free synthesized multiplexes as the
number of overlapping users increases.
The results are shown in Fig. \ref{fig:sf_ksn_ov}.  
As the number of 
overlapping users increases, the algorithm's performance improves drastically, demonstrating
the efficacy of KSN even when overlapping users 
exist. With the overlapping percentage at 16.67\%, 
KSN activates over 80\% of the nodes in the
multiplex with just five seed nodes, as opposed to in the case of no overlap,
where activation is at roughly 35\% with 5 seed nodes.  
In addition, this experiment provides further evidence of the strong benefit overlapping users provide in
the influence propagation.

\begin{figure}
  \centering  
	\subfigure[Total activation of ISF: Overlap Size vs Seed Size (Erdos-Renyi)] {
		\includegraphics[scale=0.27]{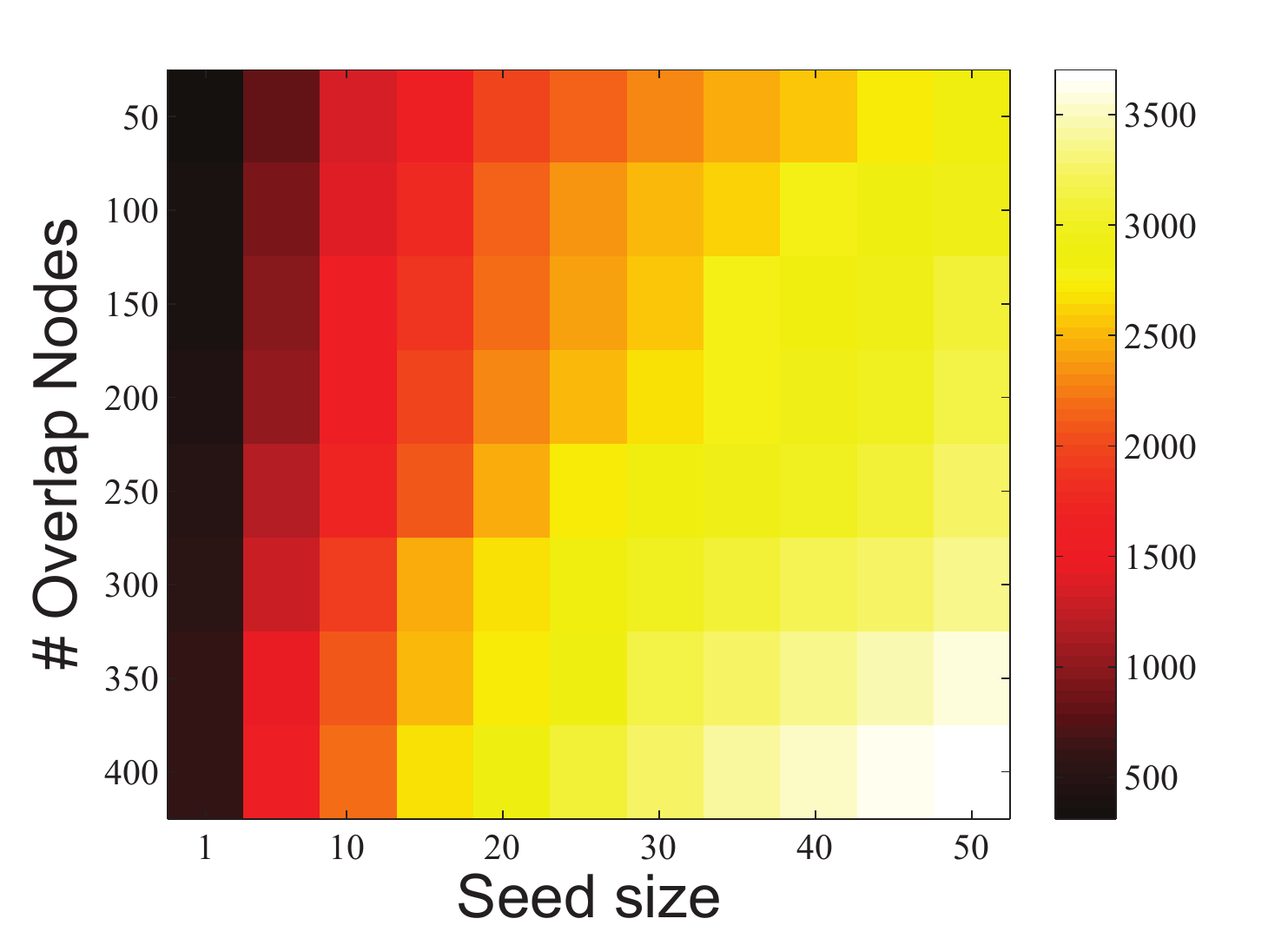}
		\label{fig:er_syn_ov_inf_spread_heat}
	} 	
	\subfigure[Effect of overlapping on KSN performance (Scale-free)] {
		\includegraphics[height=0.12\textheight]{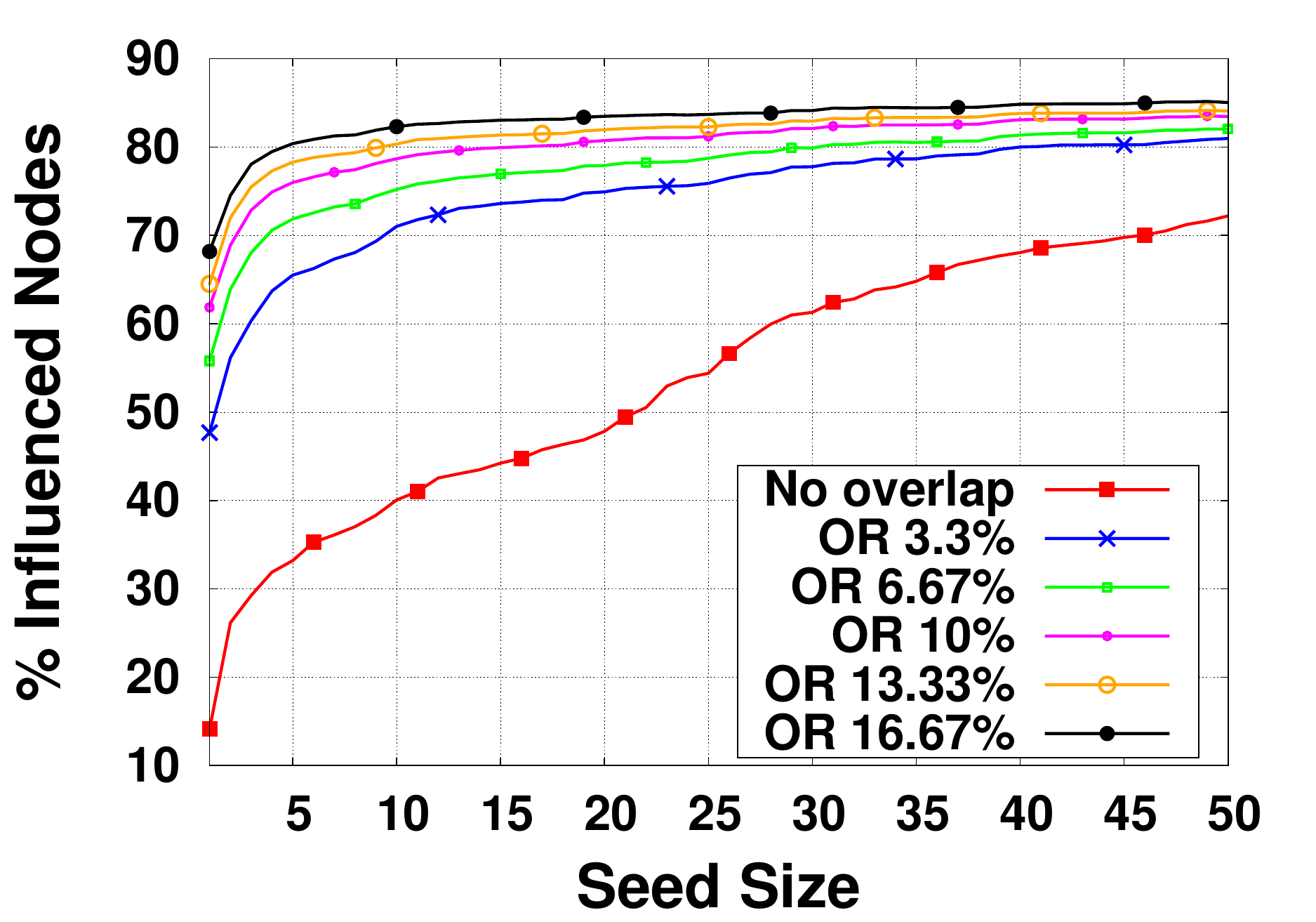}
		\label{fig:sf_ksn_ov}
	}
	  \caption{Influence Spread for different overlap ratio, Diffusion Models- Net1: IC, Net2: LT}
			\label{fig:er_syn_ov_dist}
			\vspace{-12pt}
\end{figure}

\subsection{Algorithm performance on real networks} \label{sect:real}
The first real multiplex we consider is based
upon sections of \emph{Twitter} and \emph{Foursquare} (\emph{FSQ}) networks. The generation of this dataset
is described in \cite{shen2012interest}; overlapping
users were identified by using Foursquare API v1 to
identify the Twitter usernames corresponding to a 
Foursquare account. The weight of each link
in Twitter is inferred by using frequency
of tweets between users.  In Foursquare, the weight
of each link is assigned value 1, due to the lack
of a message dataset. The number of overlapping nodes is 4100, see Table \ref{tab:DataSet}.

The second real multiplex is based upon academic
collaboration networks, described in \cite{shen2012interest}.  The layers are organized by the research area:
\emph{Condensed Matter} (\emph{CM}), \emph{ High-Energy Theory} (\emph{Het}) \cite{newman2001structure}, and  \emph{Network Science} (\emph{NetS}) \cite{newman2006finding}. 
A user is considered to be overlapping if he or she 
has published in two or more of these three fields.
The CM-HET-NetS (CHN) multiplex is considered 
an undirected network throughout the experiments.  
The number of 
overlapping users are 
2860, 517, and 90 between CM-HET, CM-NetS, and HET-NetS, respectively; 75 users are present in all three networks.

\subsubsection{Model selection}
Saito et al. \cite{saito2010selecting} 
have performed machine-learning
techniques to match variants of the IC and LT
models with real propagation events.
They found that even on the same network,
different propagation 
events may be better explained by disparate
models. 
In all the Twitter-FSQ experiments we assigned 
Twitter the LT model, and Foursquare the IC model; 
experiments that swapped the
model selection gave similar qualitative results.  
On CHN, we assigned CM, HET and NetS the
LT, IC and LT models, respectively. Thresholds
were assigned uniformly
randomly in $[0,1]$, while edge weights were
determined as described above for Twitter, 
and randomly in $[0,1]$ for the other networks.
Notice that with this choice of models,
Theorem \ref{theorem1} applies; 
therefore, the approximation ratios
of ISF and KSN hold.


\begin{figure*}
  \centering  
	\subfigure[Total activation in Twitter-FSQ] {
		\includegraphics[scale=0.25]{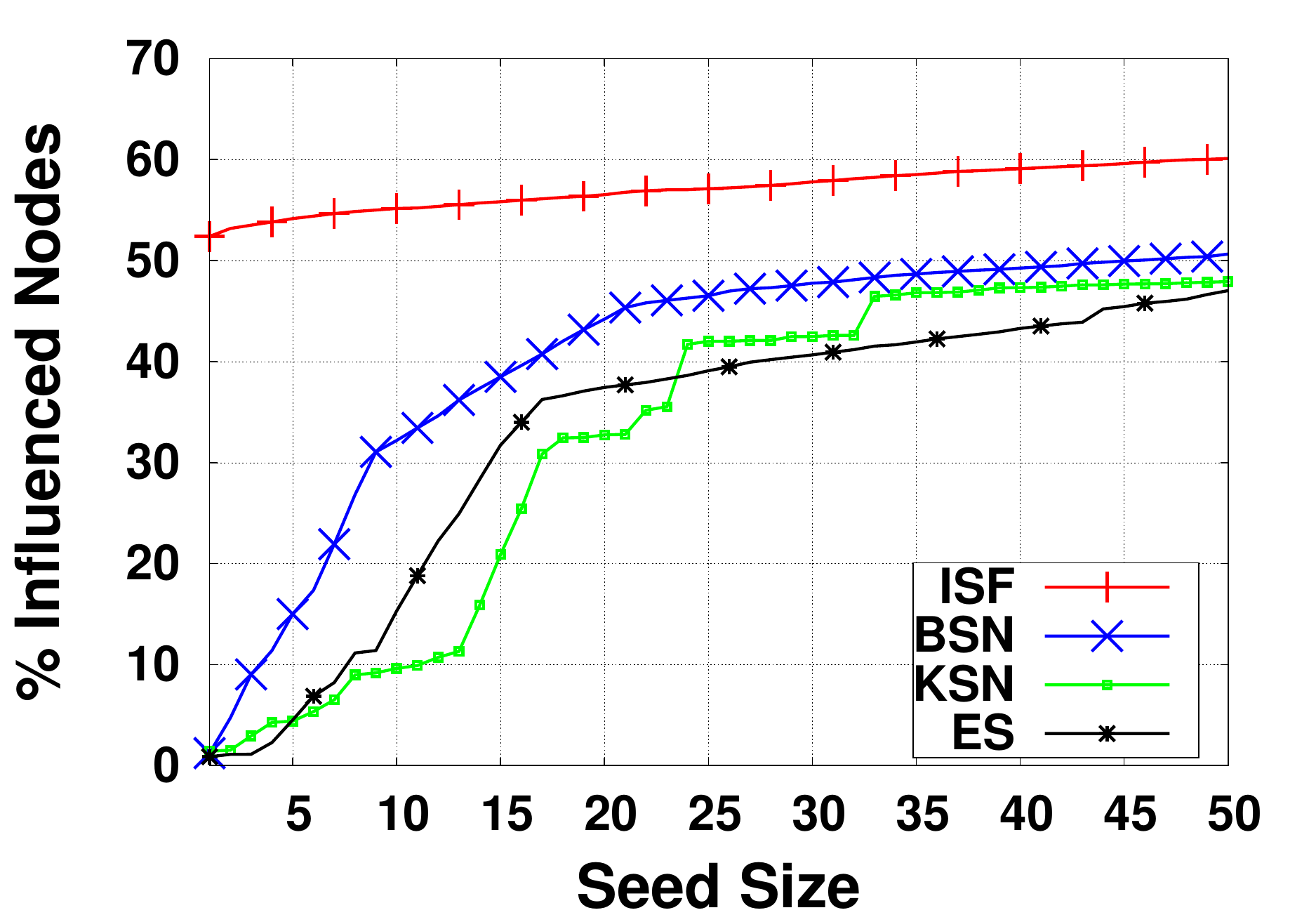}
		\label{fig:ft_inf_spread_isf}
	} 
	\subfigure[Activated Node Composition (Twitter-FSQ)] {
		\includegraphics[scale=0.25]{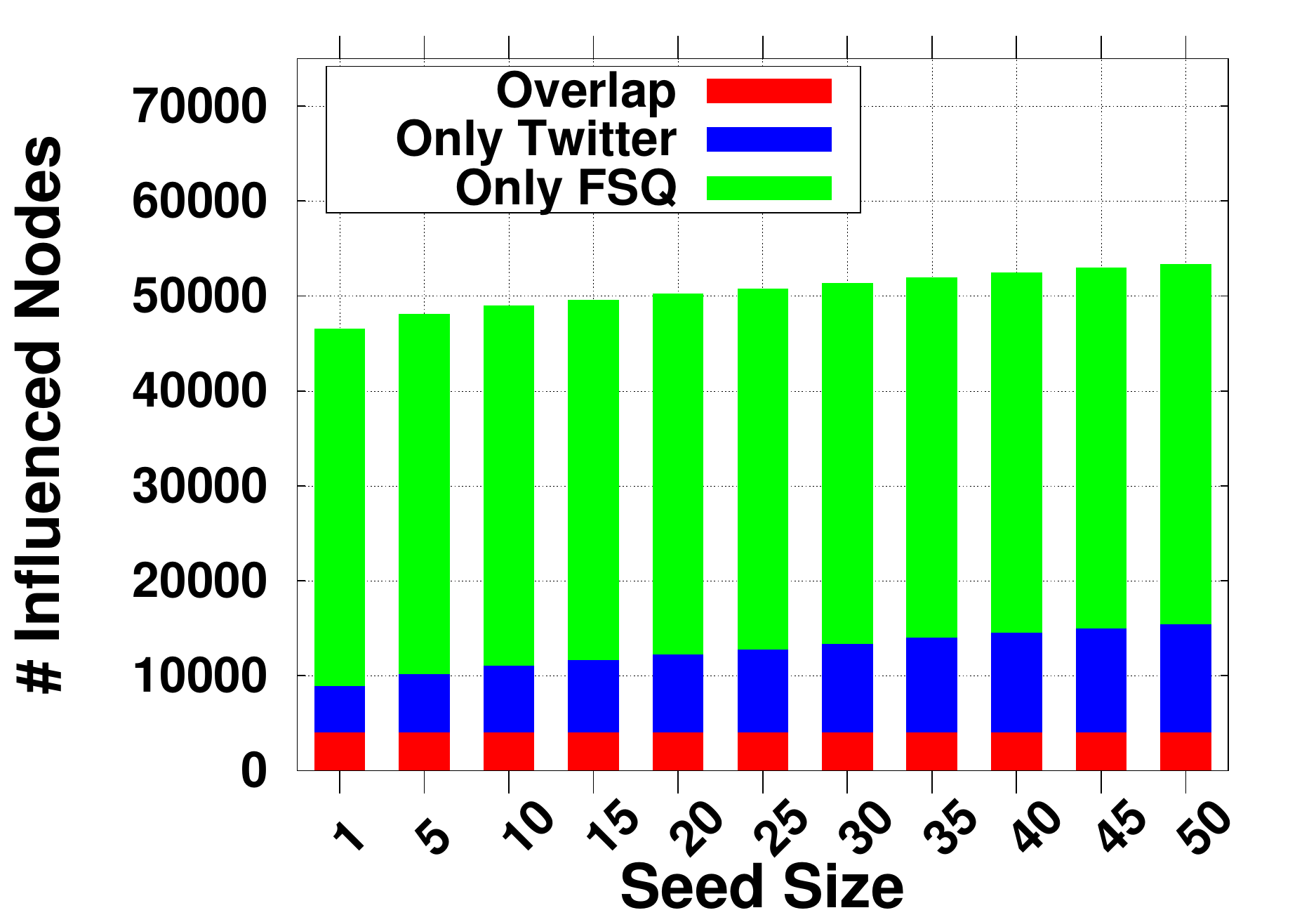}
		\label{fig:ft_inf_dist}
	}
	\subfigure[Seed Node Composition (Twitter-FSQ)] {
		\includegraphics[scale=0.25]{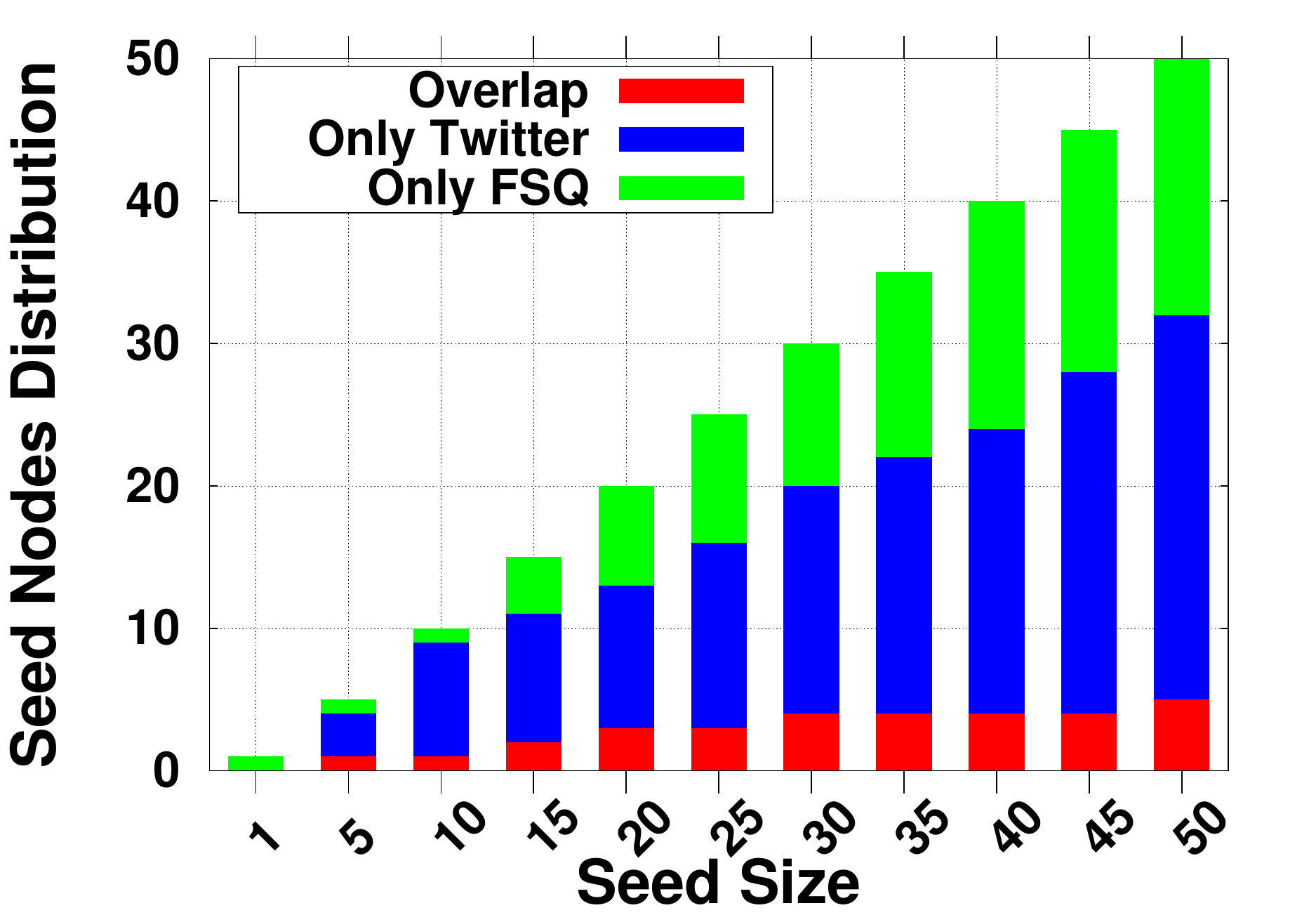}
		\label{fig:ft_seed_dist}
	}
	\subfigure[Total activataion in CM-Het-NetS] {
		\includegraphics[scale=0.25]{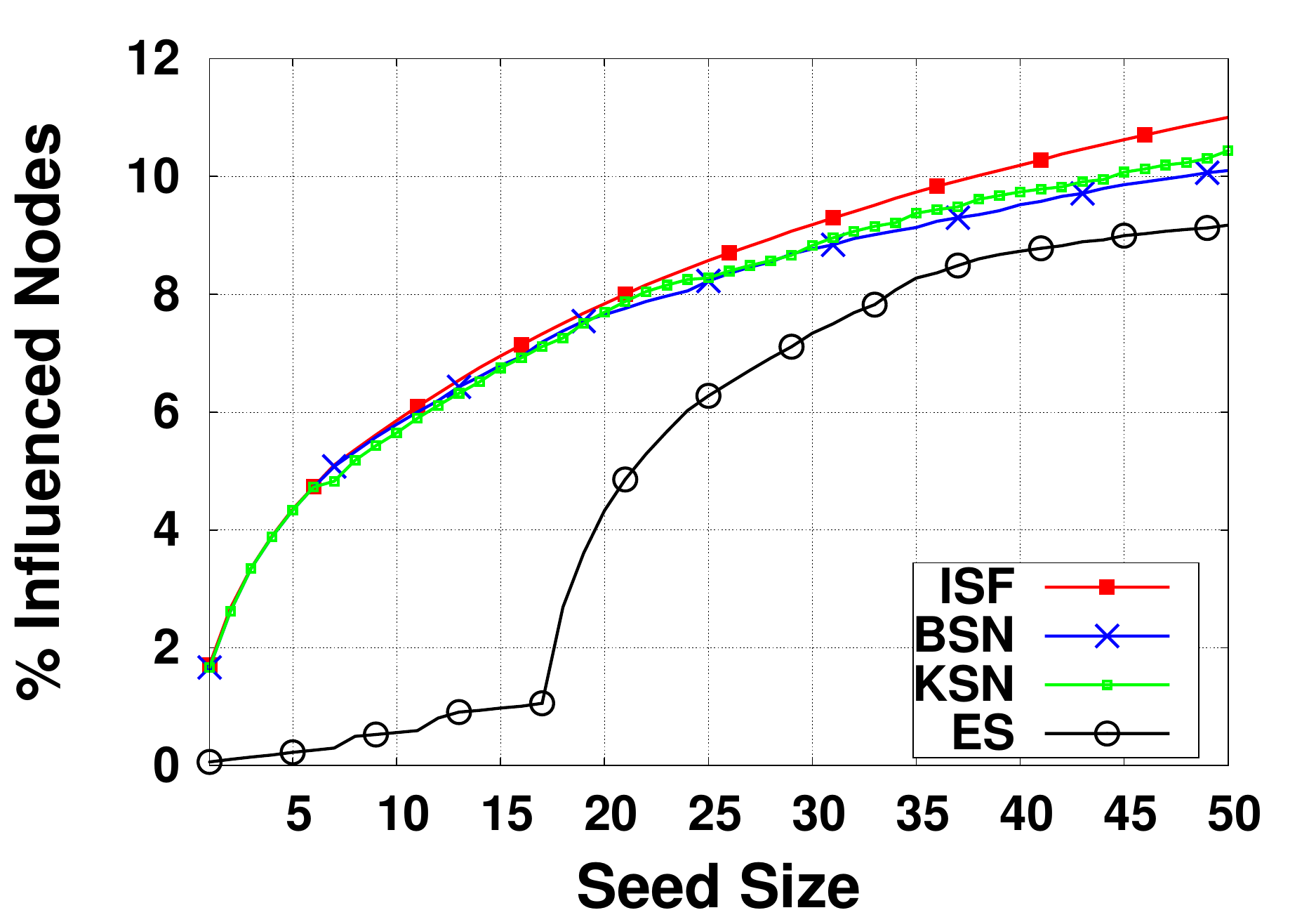}
		\label{fig:chn_inf_spread_isf}
	} 
	\subfigure[Activated Node Composition (CHN)] {
		\includegraphics[scale=0.25]{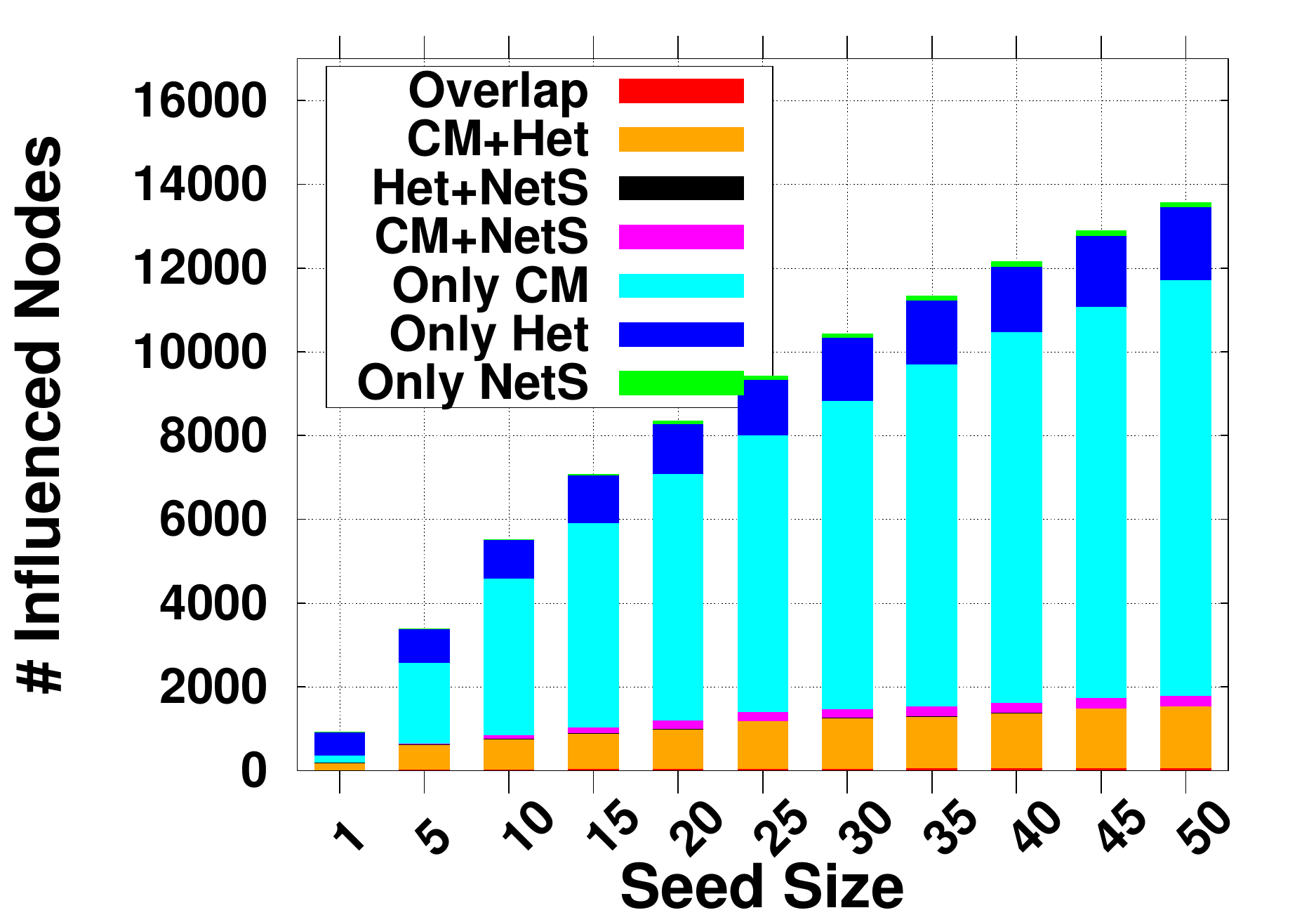}
		\label{fig:chn_inf_dist}
	}
	\subfigure[Seed Node Composition (CHN)] {
		\includegraphics[scale=0.25]{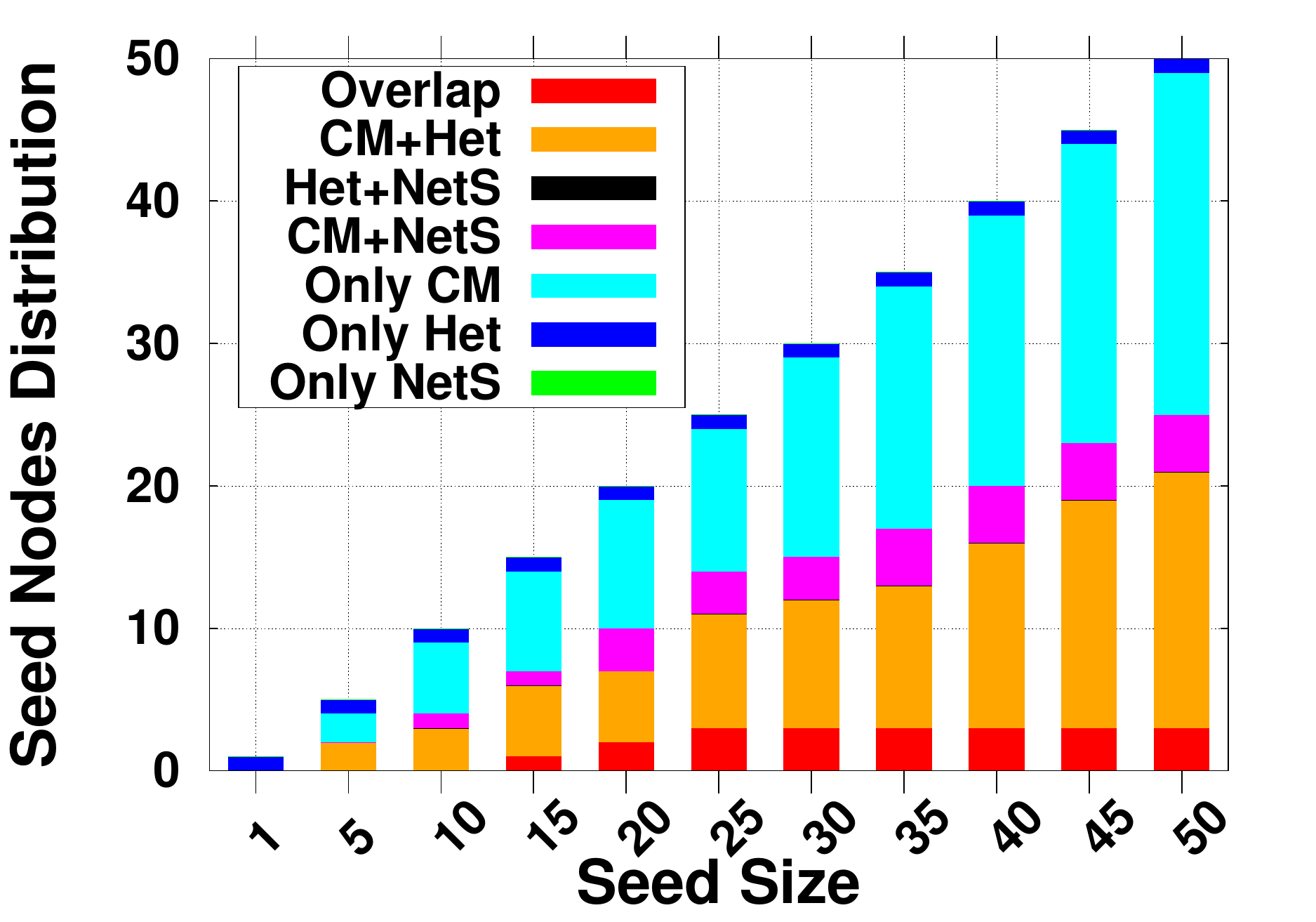}
		\label{fig:chn_seed_dist}
	}	
  \caption{Twitter-FSQ Network (top figures), co-author networks (bottom figures)  }
			\label{fig:ft_chn_isf}
			\vspace{-10pt}
\end{figure*}

\begin{table}
	\centering
    \begin{tabular}{|c|c|c|c|c|c|}
        \hline
        Networks        & Nodes & Edges    &Avg Deg \\ \hline
        Twitter    & 48277 & 16304712      & 289.7\\ \hline
        FSQ & 44992 & 1664402  &       35.99\\ \hline
				CM & 40420 & 175692        & 8.69\\ \hline
				Het & 8360 & 15751      & 1.88\\ \hline
				NetS & 1588 & 2742      & 1.73\\ \hline
    \end{tabular}
		\vspace{0.05in}
	\caption{Traces of real networks}
	\label{tab:DataSet}
\vspace{-20pt}
\end{table}

It is evident from Fig. \ref{fig:ft_inf_spread_isf} 
that the seeds found by ISF in Twitter-FSQ 
network outperforms BSN (20\% larger for $l=50$) as well as ES and KSN. An interesting observation in this figure is that relatively few (overlapping) nodes are responsible for a lot of the propagation in the ISF case -- the seed node composition is shown in Fig. \ref{fig:ft_seed_dist}. Also, in Fig. \ref{fig:ft_inf_dist}, we see influence spread is larger in FSQ compared to Twitter. 

Since the Condensed Matter Network is comparatively larger than the other two, most of the seed users are selected from it as shown in Fig. \ref{fig:chn_seed_dist}. Therefore, a significant number of finally activated users also reside in this network. The influence spread obtained in the multiplex network only taking the seeds of BSN and KSN are very close to that obtained by the seed nodes identified by ISF. Nonetheless, ISF outperforms them as the seed set becomes larger, as shown in Fig. \ref{fig:chn_inf_spread_isf}; this
again illustrates the benefit of taking into account overlapping users in the solution.

As depicted in Fig. \ref{fig:ft_inf_dist}, the composition of influenced users in the Twitter-FSQ network suggests that majority of influenced users in the multiplex network belongs to FSQ which implies propagation spreads easily in this network compared to Twitter. The same observation can be drawn for CM network in case of co-author network, shown in Fig. \ref{fig:chn_inf_dist}. As illustrated in Fig. \ref{fig:ft_seed_dist} and Fig. \ref{fig:chn_seed_dist}, the seed set of the multiplex network identified by ISF contains a much higher number of nodes from a specific network than other networks. 
In addition, overlapping users show significant role in diffusing information by occupying a considerable fraction of the seed set chosen by ISF.

\emph{Running time:}
In Fig. \ref{fig:run_time}, we compare the running time for the four algorithms on
CHN, and Twitter-FSQ multiplexes.  
The effect of parallelization by layer on
the running time may be seen, with ISF taking much longer
in all cases.

\subsection{Scalability of KSN} \label{sect:ksn-scale}
\begin{figure}
\centering
\subfigure[$n_{ER}=10^5$] {
  \includegraphics[scale=0.20]{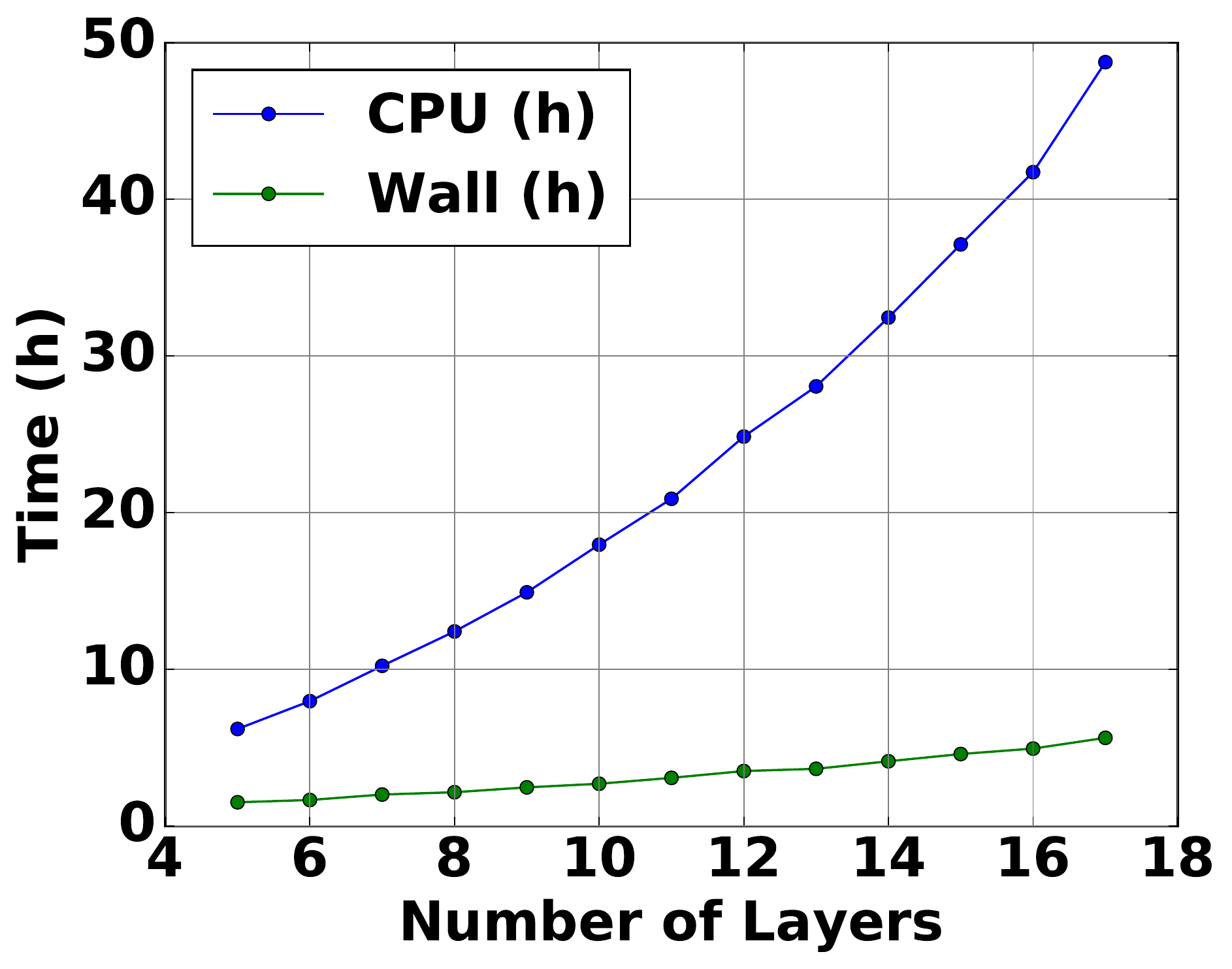}
  \label{fig:ksn-scale-k}
}
\subfigure[$k = 10$] {
  \includegraphics[scale=0.20]{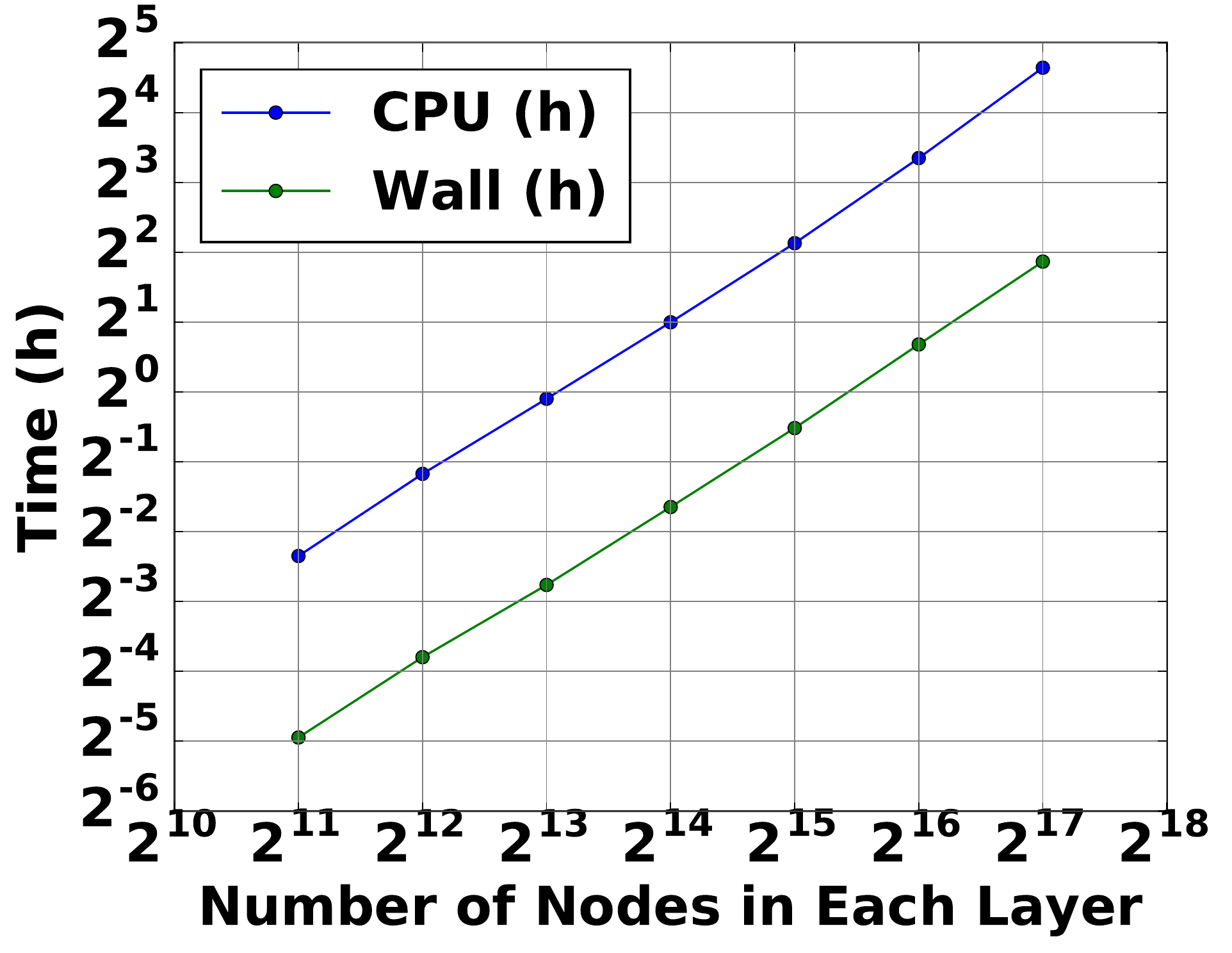}
  \label{fig:ksn-scale-n}
}
\caption{Scalability of KSN} \label{fig:ksn-scale}
\end{figure}
In our final set of experiments, we demonstrate
the scalability of KSN on large multiplexes. 
For these experiments, we used synthesized multiplexes,
where each layer is an ER network with average degree 5.
Each layer has the same number $n_{ER}$ of vertices. 
Layer $i$
is assigned model IC if $i$ is even and model LT otherwise,
with edge weights uniformly chosen in $(0, 0.1)$. The number of 
overlapping users is set to $o = 0.1 n_{ER}$, and 
each overlapping user is present in all layers of the multiplex. For these 
experiments, the number of seeds $l = 100$. Since the solution
of IMM of size $m$ is not necessarily contained in the solution
of size $m + 1$, IMM is run on each layer for all values from 1 to $l$,
as indicated in the pseudocode of KSN.

Results for the running time of KSN are shown in Fig.
\ref{fig:ksn-scale}. The first experiment varies the
number $k$ of layers in the multiplex from 5 to 17, with $n_{ER} = 10^5$.
Thus, the largest multiplex in this experiment has
$1.54 \times 10^6$ unique users. In Fig. \ref{fig:ksn-scale-k},
we show the running time in hours of KSN versus $k$, for both
total CPU time and the wall-clock time. Thus, on the
largest multiplex with $k = 17$,
KSN finished in roughly 5 hours of wall-clock time, demonstrating a high 
level of parallelization. Results for the second
experiment are shown in Fig. \ref{fig:ksn-scale-n},
where the number of layers is fixed at $k = 10$,
and $n_{ER}$ is varied from 2048 to 131072. In this experiment,
both the wall-clock time and CPU time increase linearly,
with the wall-clock time below 4 hours on the largest
multiplex with $1.19 \times 10^6$ unique users.

\section{Related Works}\label{label:relatedwork}
In a seminal work on single-layer networks,
Kempe et al. showed the IC and 
LT models were submodular \cite{Kempe2003} by utilizing
a ``live edge'' approach, thereby allowing the use of
a greedy algorithm to approximate the influence maximization
on single networks.  The ``live edge'' approach implicitly
establishes the stronger GDS property as defined above for
these models.  The inapproximability of the classical max
coverage problem precludes any better approximation to the
influence maximization problem.  Since this work,
there have been a number of improvements to the running time
of the greedy algorithm on single networks. The first improvement
was through the use of priority queue to do a ``lazy evaluation''
of estimated influence
in \cite{leskovec07}.  
Cohen et al. provided an approximation algorithm for influence maximization one or two orders of magnitude faster than previous works on single-layer 
networks \cite{cohen2014sketch}; Kuhnle et al. \cite{Kuhnle2017}
adapted this framework for the threshold activation problem.
Borgs et al. provide a nearly runtime-optimal algorithm on single networks with the IC model \cite{Borgs2014max}, and Tang et al. \cite{Tang2014} provided a fast algorithm with with high probability achieves 
the greedy ratio $1 - 1/e - \epsilon$ for
the triggering model; this method was further improved by Tang \cite{Tang2015} 
using martingales. Nguyen et al. \cite{Nguyen2016,Nguyen2016a} and Huang et al. \cite{Huang2017} have
further improved the sampling techniques to yield even faster algorithms for the single-layer network
problem with the $1-1/e-\epsilon$ ratio; also, Li et al. \cite{Li2017} have developed
a scalable and nearly optimal algorithm.

A considerable number of works have studied influence maximization 
for variants of IC models and its extensions such as 
\cite{Chen2010,Kempe2005}. For a deterministic variant of
LT, Feng et al. \cite{Feng2009} showed NP-completeness for 
the problem and Dinh et al \cite{Dinh2012} proved the 
inapproximability as well as proposed efficient algorithms for 
this problem on a special case of LT model. In their model, the 
influence between users is uniform and a user is influenced if a 
certain fraction $\rho$ of his friends are active.

Researchers 
have started to explore the influence maximization problem
on multiplex networks with works of 
Yagan et al. \cite{Yagan2012} and Liu et al. \cite{Liu2012} which 
studied the connection between offline and online networks. The first 
work investigated the outbreak of information using the SIR model 
on random networks. The second one analyzed networks formed by 
online interactions and offline events. The authors focused on 
understanding the flow of information and network clustering but 
not solving the heterogeneous influence problem. 

Shen et al. \cite{shen2012interest} explored the information propagation 
in multiplex online social networks taking into account the interest and
 engagement of users. The authors combined all networks into one network
 by representing an overlapping user as a super node. This method cannot
 preserve the heterogeneity of the layers.  Nguyen et al. 
\cite{nguyen2013ICDM} studied the influence maximization problem which 
handles multiple networks but only considers homogeneous diffusion process
 across all the networks. None of these works took into consideration 
 the heterogeneity of diffusion processes in multiplex networks.
On the other hand, our scheme overcomes these shortcomings by enabling different networks to have different influence propagation models. Pan et al. \cite{Pan2017} study 
threshold activation problems on multiplex networks
under a diffusion model with continuous time.

 
\section{Conclusion}\label{label:conclusion}
We formulate the Multiplex Influence Maximization problem that 
seeks to maximize influence propagation in a multiplex with
overlapping users and heterogeneous propagation.
We provide a property (GDS) which carries over from
single layer to multiplex propagation, giving the $1 - 1/e$ ratio for the
greedy algorithm ISF if it is satisfied on each layer. 
We also develope an approximation algorithm
KSN
that benefits from the optimizations that
influence maximization in a single network has undergone (e.g. in
\cite{Tang2014, Borgs2014max, cohen2014sketch}). We prove the
approximation ratio of KSN, which depends on the number
of overlapping users. 
As demonstrated in our experimental section, the performance 
KSN may fall short of ISF when overlapping effects become large.  
Due to the long
running time of ISF, future work would include 
attempting to find faster approximation 
algorithms in the heterogeneous multiplex setting.
\section{Acknowledgements}
This work is supported in part by NSF grant CNS-1443905 and
DTRA grant HDTRA1-14-1-0055.

\bibliographystyle{unsrt}
\bibliography{ref,mendeley}
\end{document}